\documentclass[conference]{IEEEtran}
\IEEEoverridecommandlockouts

\usepackage{cite}
\usepackage{amsmath,amssymb,amsfonts,amsthm}
\usepackage{algorithmic}
\usepackage{algorithm}
\usepackage{graphicx}
\usepackage{textcomp}
\usepackage{xcolor}
\usepackage{booktabs}
\usepackage{multirow}
\usepackage{enumitem}
\usepackage{bm}
\usepackage{tikz}
\usepackage{url}
\usetikzlibrary{arrows.meta, positioning}

\newcommand{\E}{\mathbb{E}}

\newcommand{\calP}{\mathcal{P}}

\newcommand{\calI}{\mathcal{I}}

\newcommand{\calK}{\mathcal{K}}
\newcommand{\calZ}{\mathcal{Z}}

\newcommand{\rev}[1]{#1}

\newtheorem{theorem}{Theorem}[section]

\newtheorem{remark}{Remark}

\newtheorem{proposition}[theorem]{Proposition}

\allowdisplaybreaks[4]

\begin{document}

\title{Robust KV Cache Management for LLM Serving under Output Token Length Uncertainty}

\author{ \IEEEauthorblockN{Jiaming Cheng, Duong The Do, Duong Tung Nguyen\\
\IEEEauthorblockA{Arizona State University, Tempe, AZ 85281, USA,
\{jiaming, dtdo4, duongnt\}@asu.edu} \\}}
\maketitle

\begin{abstract}
KV cache memory is a primary bottleneck in modern LLM serving systems deployed on GPU clusters. A fundamental challenge is that the KV cache must be reserved upon request arrival, while the output token length remains unknown until generation completes. Under-reservation triggers preemption---forcing termination and recomputation of requests and incurring significant overhead---whereas over-reservation wastes memory and reduces throughput. This creates a central trade-off between memory efficiency and preemption risk. We present a robust KV cache management framework for LLM serving that jointly optimizes GPU parallelism configuration, KV cache reservation per request class, request routing across heterogeneous serving groups, and prefix caching for shared prompts. The framework incorporates latency SLO constraints and captures the interaction between memory allocation, throughput, and queueing delay. To address output token length uncertainty and workload distribution shift, we develop a Wasserstein distributionally robust optimization (DRO) formulation together with a scalable block coordinate descent algorithm for the resulting mixed-integer problem. Our analysis reveals a critical fractile structure that automatically adapts reservation quantiles to different preemption and memory cost regimes without manual tuning. Trace-driven evaluation on production LLM workloads, including BurstGPT, Azure, and ShareGPT traces, demonstrates up to 56\% lower cost than fixed-quantile reservation baselines while maintaining competitive P99 latency, goodput, and SLO violation rates across diverse operating regimes.
\end{abstract}

\begin{IEEEkeywords}
LLM inference, KV cache, token length uncertainty, memory management, distributionally robust optimization
\end{IEEEkeywords}

%==============================================================================
\section{Introduction}
\label{sec:intro}
%==============================================================================

Cloud-hosted large language model (LLM) inference has become a dominant workload on modern GPU clusters, with batched autoregressive decoding driving aggregate utilization at major providers. A central operational challenge in this setting is the management of the per-request key-value (KV) cache, which grows by one entry per generated token and must be allocated at request admission time, before the decoder has emitted any output. For a 70B-parameter model, the KV cache consumes approximately $2.5$\,MB per token, so a 2{,}000-token response requires $5$\,GB and may exceed the memory of a single accelerator~\cite{vllm2023}. Under-reservation triggers mid-generation preemption: the request is evicted from the running batch, completed tokens are discarded, and the request is recomputed from scratch, inflating time-to-first-token (TTFT), inter-token latency (ITL), and tail-percentile SLO violations. Over-reservation, conversely, leaves KV memory idle, reduces the achievable batch size, and lowers goodput---the rate of admitted-and-on-time requests---below the configured capacity. Because continuous-batching admission in modern serving runtimes is constrained by aggregate KV residency, the per-class reservation decisions directly determine both throughput and service quality. %choice therefore determines both throughput and SLO attainment.

%--- CHALLENGE: Why this problem is hard ---
This problem is challenging for three reasons. \textbf{(C1) Workload heterogeneity.} Output lengths vary by orders of magnitude across application classes: tens of tokens for chat, hundreds for retrieval-augmented question answering, and thousands for code generation. A single global quantile-based reservation cannot therefore be optimal for all classes simultaneously. \textbf{(C2) Non-stationarity.} Workload distributions drift over time as user behavior, prompt templates, and traffic mix evolve; reservation policies fitted to historical traces degrade as the empirical tail shifts. \textbf{(C3) System heterogeneity.} Modern serving stacks partition GPUs into multiple groups under different tensor parallelism (TP) and pipeline parallelism (PP) configurations, employ iteration-level continuous batching, and may disaggregate the prefill and decode phases onto separate engines~\cite{zhong2024distserve,agrawal2024taming}, producing a discrete set of operating points that trade KV capacity, throughput, and tail latency. The cluster operator must therefore jointly determine the GPU partition, the per-class reservation, and the request routing in advance of any per-request output realization.

%--- GAP: What prior work misses ---
Existing work primarily improves LLM serving efficiency through \textit{runtime mechanisms} such as KV-cache paging, continuous batching, and prefill/decode disaggregation~\cite{vllm2023,yu2022orca,zhong2024distserve,agrawal2024taming}. These systems focus on efficient execution once memory allocations are determined, but leave reservation and routing decisions largely to operator-defined heuristics. Other approaches either reduce KV memory through token eviction~\cite{zhang2023h2o,li2024snapkv} or rely on fixed reservation heuristics (Mean, P90, P95, P99, Max), which are sensitive to workload drift and effective only under limited operating regimes. Meanwhile, distributionally robust optimization (DRO)~\cite{mohajerin2018data} has been developed for stochastic decision-making, but has not been applied to KV-cache reservation in LLM serving systems. 
Classical DRO inventory models also do not capture the in-flight memory growth and recomputation overhead specific to LLM serving.
Putting these threads together, no existing work supplies the decision policy that jointly chooses parallelism, reservation, routing, and prefix caching under output-length uncertainty with robustness guarantees. Section~\ref{sec:related} expands this analysis; the gap motivates two research questions:

\smallskip\noindent\textbf{RQ1:} \textit{How should we reserve KV cache under output length uncertainty to minimize the combined cost of preemption (under-reservation) and wasted capacity (over-reservation)?}

\smallskip\noindent\textbf{RQ2:} \textit{What reservation policy is optimal, and how does it adapt to different cost structures without manual tuning?}

%--- APPROACH AND CONTRIBUTIONS ---
To address these challenges, we formulate a joint reservation-routing optimization using distributionally robust optimization (DRO) with Wasserstein ambiguity sets. The single-request reservation reduces to the classical newsvendor problem~\rev{\cite{edgeworth1888,arrow1951}}; what makes our setting distinct is that memory demand grows \emph{during} execution as tokens are generated, under-reservation triggers costly recomputation, and over-reservation reduces concurrent serving capacity---and the reservation is coupled across concurrent requests through a shared GPU-memory pool rather than solved per request in isolation.
Fig.~\ref{fig:system_model} illustrates the proposed system architecture and the interaction between the control and execution layers.

Our analysis reveals that the optimal reservation follows a \emph{critical fractile} structure---the system automatically becomes more conservative when preemption is expensive and more aggressive when over-reservation is costly, eliminating the need for manual quantile tuning.
Our contributions include:
\begin{itemize}[nosep]
    \item We develop a unified optimization framework that jointly coordinates GPU parallelism configuration, KV cache reservation, request routing, and prefix caching under output token length uncertainty and latency SLO constraints. 
    \item We formulate the problem as a Wasserstein DRO model and derive a critical fractile structure for the optimal reservation policy, enabling automatic adaptation to different preemption and memory cost regimes without manual quantile tuning. We design a scalable block coordinate descent algorithm (BCD-DRO) that supports periodic re-optimization under workload drift.
    \item Through trace-driven evaluation on production LLM workloads (BurstGPT, Azure, ShareGPT), we demonstrate that \emph{no single fixed-quantile heuristic is optimal across cost regimes}---each is optimal only in a narrow regime, while DRO automatically adapts to any cost structure.
\end{itemize}

\begin{figure}[t]
\centering
    \includegraphics[width=\columnwidth]{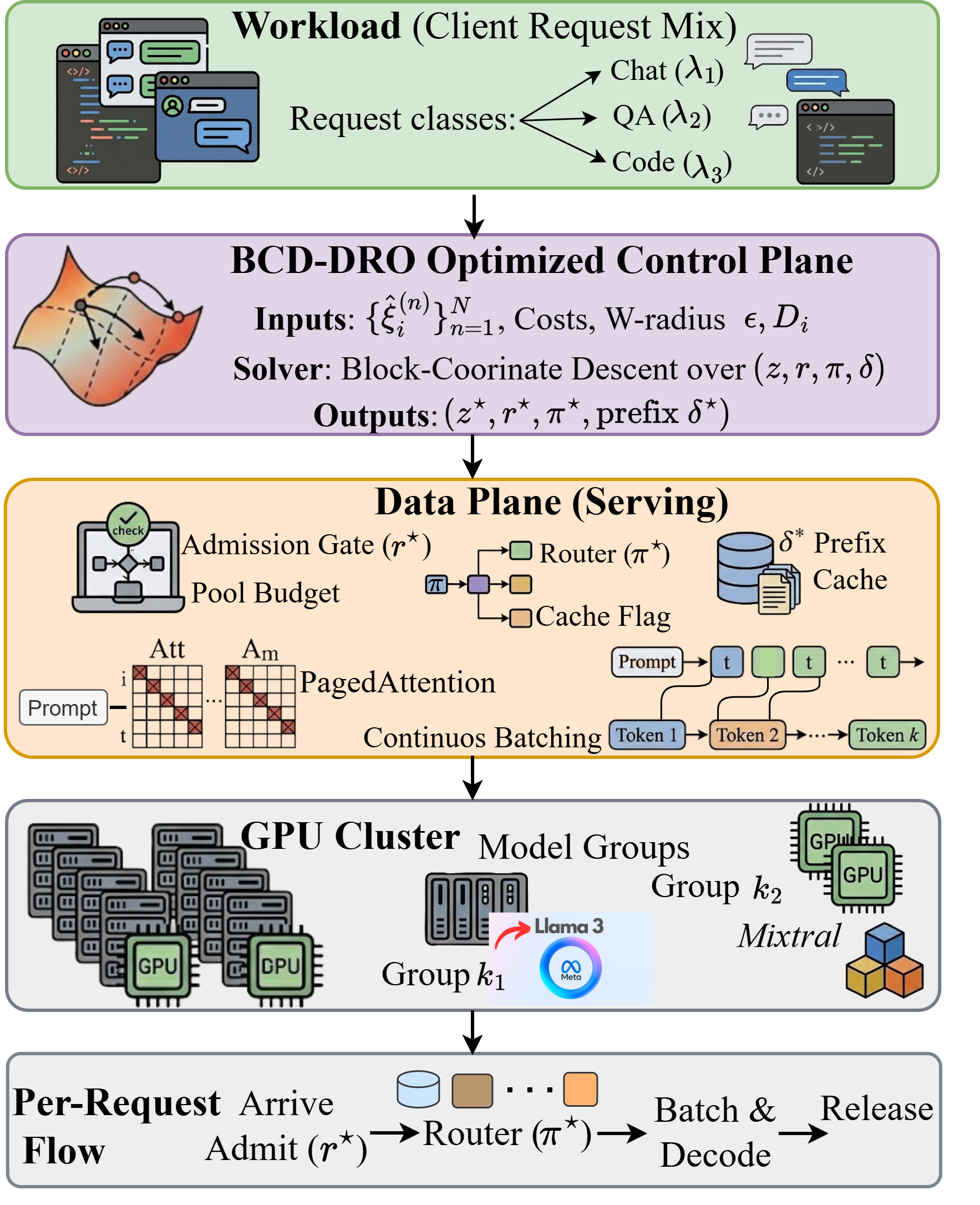}
    \caption{System model: BCD-DRO (purple, this work) is a control-plane policy that ingests historical samples, cost parameters, and a Wasserstein radius and produces four control variables $(z^*, r^*, \pi^*, \delta^*)$ per re-optimization tick. The data plane---PagedAttention, continuous batching, optional P/D split---consumes these variables unchanged; the GPU cluster executes decoding; the dashed arrow closes the rolling-horizon loop with telemetry feedback.}
\label{fig:system_model}
\end{figure}

%==============================================================================
\section{Related Work}
\label{sec:related}
%==============================================================================

We organize prior work into two main categories: runtime KV-cache management and output-length-aware reservation policies. 
While existing work has advanced these directions largely independently, our work integrates them into a unified framework for \textit{uncertainty-aware LLM serving}.

\noindent \textbf{\textit{Runtime KV-Cache Management and Scheduling.}}
Recent LLM serving systems primarily address the KV-cache bottleneck through runtime execution mechanisms rather than reservation policies. PagedAttention~\cite{vllm2023}, the memory management mechanism underlying vLLM, replaces contiguous KV allocation with page-table indirection to improve memory utilization and reduce fragmentation. However, it reacts to memory pressure only \textit{after} KV capacity is exhausted, preempting and later resuming requests \textit{without explicitly determining reservation sizes in advance} or optimizing the KV block pool relative to workload characteristics. Orca-style continuous batching~\cite{yu2022orca} interleaves requests at iteration granularity to maximize GPU utilization while treating per-request memory allocation as an exogenous input. DistServe~\cite{zhong2024distserve} disaggregates prefill and decode execution to improve goodput and tail latency, while Sarathi-Serve~\cite{agrawal2024taming} reduces latency through chunked prefill scheduling. These systems focus on runtime execution once deployment parameters are fixed but leave the key control decisions---per-class reservation, GPU partitioning, request routing, and prefix caching---to the operator. In contrast, our work operates at the decision-policy layer: it jointly determines all four under output-length uncertainty with Wasserstein robustness, and the resulting policies are directly executable by the runtimes above without data-plane modifications (Table~\ref{tab:related_compare}).

\noindent \textbf{\textit{Output-Length-Aware Reservation and Robust Optimization.}}
Another line of work addresses uncertainty in output length and KV-cache growth. Eviction-based methods such as H$_2$O~\cite{zhang2023h2o} and SnapKV~\cite{li2024snapkv} reduce the per-token KV-cache footprint by dropping tokens deemed unimportant, trading accuracy for memory efficiency. These methods are orthogonal to our reservation policy and can be combined with it; improved reservation decisions reduce the amount of KV eviction required and, therefore, mitigate the associated accuracy degradation. Length-prediction approaches~\cite{jin2023s3,zheng2024response,qiu2024power} estimate output length from prompts and reserve memory based on predicted sequence lengths, while fixed-quantile heuristics (Mean, P90, P95, P99, Max) are widely used in practice as deployment defaults in production stacks~\cite{vllm2023}. 

Each of these approaches has limitations. Eviction methods reduce memory usage after execution begins and do not jointly optimize reservation, routing, or configuration. Prediction-based methods are sensitive to estimation error and offer no guarantee against distribution shift. Fixed-quantile heuristics are each optimal only in narrow operating regimes and cannot adapt to changing cost structures. Parallelism strategies~\cite{shoeybi2019megatron,pope2023efficiently} are studied extensively for training, but inference deployments typically fix a configuration and do not jointly optimize memory reservation under uncertainty. Recent workload-allocation and heterogeneous-serving frameworks~\cite{cheng_fastllm,cheng_greenllm} decide \emph{where} to run and \emph{how} to route requests under known demand. In contrast, this work is orthogonal and new: \emph{how much} KV cache to reserve per request when the output length is unknown. This adds the reservation decision, its Wasserstein-DRO formulation and critical-fractile characterization, the capacity constraint that couples reservation to throughput via Little's law, and prefix caching $\delta$.

Distributionally robust optimization (DRO) has been widely studied for decision-making under uncertainty~\cite{mohajerin2018data}, particularly with Wasserstein ambiguity sets for robustness against distribution shift. We bring this machinery into LLM serving by formulating KV-cache reservation as a Wasserstein DRO problem with explicit SLO constraints, jointly co-designing parallelism, reservation, routing, and prefix caching with a critical-fractile reservation structure. Table~\ref{tab:related_compare} summarizes the differences.

\begin{table}[t]
\centering
\caption{Summary of literature}
\label{tab:related_compare}
\scriptsize % Reduced from \footnotesize
\setlength{\tabcolsep}{3pt} % Reduced from 2pt
\begin{tabular}{l|c|c|c|c|c}
\toprule
\textbf{System} & \textbf{Mem.} & \textbf{Res.} & \textbf{Joint} & \textbf{Len.} & \textbf{Robust.} \\
                & \textbf{layout} & \textbf{policy} & \textbf{decis.} & \textbf{model} & \textbf{guar.} \\
\midrule
PagedAttn/vLLM~\cite{vllm2023}    & paged   & reactive & --      & none  & --   \\
Orca~\cite{yu2022orca}            & cont.\ batch & --        & --      & none  & --   \\
DistServe~\cite{zhong2024distserve} & P/D split & --        & topology & none  & --   \\
Sarathi-Serve~\cite{agrawal2024taming} & chunked  & --        & --      & none  & --   \\
H$_2$O~\cite{zhang2023h2o}, SnapKV~\cite{li2024snapkv} & eviction & --        & --      & post-hoc & --   \\
Length predictors                 & any     & point est. & --      & regression & --   \\
\midrule
\textbf{BCD-DRO (ours)}$^*$       & \emph{any} & DRO-opt. & \checkmark & Wasserstein & Wasserstein \\
\bottomrule
\end{tabular}
\vspace{2.5pt} 
\raggedright
\tiny $^*$Only BCD-DRO co-optimizes the four control knobs jointly under an explicit robustness guarantee. ``Mem.'' denotes a runtime memory-management mechanism that operates on inputs provided externally; ``Policy'' denotes a decision rule that produces those inputs.
\end{table}

%==============================================================================
\section{System Model and Problem Formulation}
\label{sec:model}
%==============================================================================

\subsection{System Overview}

We consider an LLM service provider operating a GPU cluster with $J$ GPUs. The provider partitions GPUs into \emph{serving groups}, each using a specific parallelism configuration. Let $\calK$ denote the set of candidate configurations, where each $k \in \calK$ specifies tensor parallelism degree $\tau_k$ and pipeline parallelism degree $p_k$, consuming $\tau_k p_k$ GPUs per group. Each configuration has derived KV cache capacity $M_k$ (in tokens) and effective compute/bandwidth $(C_k, B_k)$.

\smallskip\noindent\textbf{Request Classes.} Requests are partitioned into classes $i \in \calI$ based on application type (chat, code generation, QA). Each class $i$ has: arrival rate $\lambda_i$, input length $L_{\text{in}}^i$ (known), shared prefix $L_{\text{pre}}^i$ (cacheable), latency SLO $D_i$, and output length $\xi_i$ (\emph{uncertain} until generation completes). This class structure is motivated by two observations: (i) requests within each class exhibit \emph{similar output distributions}---enabling learning from historical data, and (ii) requests share \emph{common prefixes}---enabling prefix caching.

\smallskip\noindent\textbf{Service Time.} LLM inference consists of compute-bound prefill and memory-bandwidth-bound decode phases. The expected service time for class $i$ on configuration $k$ is:
\begin{equation}
    T_{i,k} = \underbrace{\frac{p_k \alpha L_{\text{in}}^i}{C_k}}_{\text{prefill + bubble}} + \underbrace{\frac{\beta \E[\xi_i]}{B_k}}_{\text{decode}} + \underbrace{N_L \tau_k t_{\text{ar}}}_{\text{comm.}}
    \label{eq:service_time}
\end{equation}
where $\alpha, \beta$ are model-specific constants, $N_L$ is the number of layers, and $t_{\text{ar}}$ is all-reduce latency. Note that service time uses $\E[\xi_i]$ rather than the uncertain realization $\xi_i$ because throughput is an \emph{aggregate} metric---by the law of large numbers, variability across concurrent requests averages out. In contrast, reservation decisions affect \emph{individual} requests where under-reservation triggers preemption, necessitating the DRO formulation. The offered load is $u_{i,k} = \lambda_i T_{i,k}$.

\subsection{Decision Variables}
\label{sec:formulation}

The operator jointly optimizes four decisions, all made \emph{before} observing actual output lengths:
\begin{itemize}[nosep]
    \item \textbf{Configuration deployment} $z_k \in \mathbb{Z}_+$: number of serving groups using configuration $k \in \calK$. This determines how GPUs are partitioned.
    \item \textbf{Reservation} $r_i \in [L_{\text{in}}^i, \rev{L_{\text{in}}^i + L_{\text{max}}^i}]$: KV cache tokens reserved per class-$i$ request at admission, where $L_{\text{max}}^i$ is the maximum output length.
    \item \textbf{Routing} $\pi_{i,k} \geq 0$: the portion of class-$i$ requests routed to configuration $k$. The rejected portion satisfies $\pi_{i,0} = 1 - \sum_k \pi_{i,k}$.
    \item \textbf{Prefix caching} $\delta_{i,k} \in \{0,1\}$: whether to cache the shared prefix of class $i$ on configuration-$k$ groups, yielding effective reservation $\tilde{r}_{i,k} = r_i - \delta_{i,k} L_{\text{pre}}^i$.
\end{itemize}

\subsection{Cost Structure}

The reservation decision $r_i$ creates a key trade-off:

\smallskip\noindent\textbf{Preemption Cost.} Preemption occurs when the reserved KV cache is insufficient to accommodate the generated tokens, forcing the system to terminate and recompute the request. This event incurs substantial overhead: previously generated tokens are discarded, GPU compute cycles are wasted, and the request must be re-executed, increasing both system load and latency.
Formally, preemption occurs when $L_{\text{in}}^i + \xi_i > r_i$. The resulting cost is proportional to the overflow beyond the reservation:
\begin{equation}
    C_{\text{preempt}} = c_p \cdot (L_{\text{in}}^i + \xi_i - r_i)^+,
\end{equation}
which implies that larger under-reservation leads to more severe recomputation overhead.

\smallskip\noindent\textbf{Waste Cost.} When the reserved KV cache exceeds actual usage, the unused memory cannot be utilized by other requests, effectively reducing system throughput. This opportunity cost is modeled as:
\begin{equation}
    C_{\text{waste}} = c_w \cdot (r_i - L_{\text{in}}^i - \xi_i)^+,
\end{equation}
reflecting the loss of serving capacity due to over-reservation.

\smallskip\noindent\textbf{Cost Ratio.} The preemption and waste terms together capture a fundamental trade-off: reserving too little leads to costly recomputation due to preemption, while reserving too much reduces the number of concurrent requests that can be served. The relative importance of these effects is governed by the cost ratio $\rho = c_p/c_w$, which encodes the asymmetry between under- and over-reservation and reflects the system's tolerance for preemption versus memory inefficiency. This ratio directly determines the optimal reservation quantile, as characterized in Section~\ref{sec:solution}.

\smallskip\noindent\textbf{Total Cost.} The expected cost per class-$i$ arrival combines stochastic and deterministic components:
\begin{equation}
    Q_i = \underbrace{a_i \cdot \E[c_p(\cdot)^+ + c_w(\cdot)^+]}_{\text{preempt/waste (admitted)}} + \underbrace{C_{\text{gpu}} + C_{\text{slo}} + C_{\text{rej}}}_{\text{resource, SLO, rejection}}
    \label{eq:cost}
\end{equation}
where $a_i = \sum_k \pi_{i,k}$ is admission probability, $C_{\text{gpu}} = \sum_k \pi_{i,k} c_{\text{gpu}} \tau_k p_k T_{i,k}$ penalizes GPU usage, $C_{\text{slo}}$ penalizes latency violations, and $C_{\text{rej}} = (1-a_i) c_{\text{rej}}$ penalizes rejections.

\subsection{DRO Formulation}

The output length $\xi_i$ for class $i$ follows an unknown distribution $P^i$. A natural approach is to construct a \emph{class-specific} ambiguity set: requests within the same class share similar application semantics (e.g., chat vs.\ code generation), and thus exhibit similar output length distributions~\cite{zheng2023judging,agrawal2024taming,cheng_greenllm}. This within-class homogeneity enables meaningful statistical learning from historical samples $\{\hat{\xi}_i^{(n)}\}_{n=1}^N$, forming empirical distribution $\hat{P}_N^i$.

However, optimizing under $\hat{P}_N^i$ alone (sample average approximation) is vulnerable to distribution shift---user behavior evolves, and finite samples may not capture tail behavior. We adopt \emph{distributionally robust optimization} (DRO) with a Wasserstein ambiguity set~\cite{mohajerin2018data}:
\begin{equation}
    \calP_\varepsilon^i = \{ P : W_1(P, \hat{P}_N^i) \leq \varepsilon \}
    \label{eq:ambiguity}
\end{equation}
where $W_1$ is the 1-Wasserstein distance and $\varepsilon \geq 0$ controls robustness. The Wasserstein metric is particularly suitable here: it captures realistic perturbations where output lengths shift continuously rather than jump to entirely different distributions. In practice, $\varepsilon$ is set relative to the output length scale; we evaluate sensitivity to $\varepsilon$ in Section~\ref{sec:evaluation}.

\smallskip\noindent\textbf{Complete Formulation.} The joint optimization problem is:
\begin{subequations}
\begin{align}
    &  \min_{z, r, \pi, \delta, s}  \sum_{i \in \calI} \lambda_i \Big[ a_i \sup_{P \in \calP_\varepsilon^i} \E_P[Q_i^{\text{stoch}}] + Q_i^{\text{det}} \Big] \label{eq:dro_obj} \\
   & \text{s.t.} ~ \sum_{k \in \calK} z_k (\tau_k p_k) \leq J \label{eq:gpu_budget} \\
    & \pi_{i,k} \leq z_k, \quad \forall i \in \calI, k \in \calK \label{eq:routing_consist} \\
    & \sum_{k \in \calK} \pi_{i,k} + \pi_{i,0} = 1, \quad \forall i \in \calI \label{eq:routing_prob} \\
    & (1{+}\kappa) \sum_{i \in \calI} \frac{u_{i,k} \pi_{i,k}}{z_k} \tilde{r}_{i,k} + \sum_{i \in \calI} \delta_{i,k} L_{\text{pre}}^i \leq M_k, ~ \forall k{:} z_k {>} 0 \label{eq:capacity} \\
    & s_i \geq \sum_{k \in \calK} \pi_{i,k} (W_k + T_{i,k}) - a_i D_i, \quad s_i \geq 0, ~ \forall i \in \calI \label{eq:slo_slack}
\end{align}
\end{subequations}
where $Q_i^{\text{stoch}} = c_p(L_{\text{in}}^i + \xi_i - r_i)^+ + c_w(r_i - L_{\text{in}}^i - \xi_i)^+$ captures the stochastic preemption/waste cost. $W_k$ is the expected waiting time from the Pollaczek--Khinchin formula~\cite{kleinrock1975}:
\begin{equation}
    W_k = \frac{u_k (1 + C_{s,k}^2)}{2(1 - u_k)} \bar{T}_k
    \label{eq:waiting_time}
\end{equation}
 where $u_k = \sum_i \lambda_i \pi_{i,k} T_{i,k} / z_k$ is the utilization, $\bar{T}_k$ is the mean service time, and $C_{s,k}$ is the coefficient of variation of the empirical service-time distribution (computed on training samples and held fixed across DRO and baselines). We adopt the M/G/1 model, standard for LLM inference analysis~\cite{queueing_llm2024}, as it captures Poisson arrivals with general service time distributions arising from variable output lengths. The two variance terms in our model are non-overlapping: $C_{s,k}^2$ in~\eqref{eq:waiting_time} captures \emph{temporal} service-time variability (queueing dispersion), while the $(1+\kappa)$ safety factor in~\eqref{eq:capacity} captures \emph{instantaneous} concurrency fluctuation around the Little's-law mean (memory-headroom dispersion). The two are layered, not duplicated.

\smallskip\noindent\textbf{Constraint Interpretation.}
\begin{itemize}[nosep]
    \item \textbf{GPU budget} \eqref{eq:gpu_budget}: The total GPU consumption across all serving groups cannot exceed $J$. Each configuration-$k$ group uses $\tau_k p_k$ GPUs.
    \item \textbf{Routing consistency} \eqref{eq:routing_consist}: Requests can only be routed to configurations that are actually deployed ($z_k > 0$).
    \item \textbf{Routing probability} \eqref{eq:routing_prob}: Each request is either routed to some configuration or rejected. The admission probability is $a_i = \sum_k \pi_{i,k}$.
    \item \textbf{Memory capacity} \eqref{eq:capacity}: This is the \emph{key constraint} linking reservation to throughput. By Little's law, the expected number of concurrent class-$i$ requests on configuration $k$ is $\lambda_i \pi_{i,k} T_{i,k} / z_k$. Each consumes $\tilde{r}_{i,k}$ tokens of KV cache. The safety factor $(1+\kappa)$ accounts for stochastic variability in concurrent load.
    \item \textbf{SLO constraint} \eqref{eq:slo_slack}: The expected response time (waiting $W_k$ plus service $T_{i,k}$) for admitted requests must satisfy the latency target $D_i$. The slack variable $s_i$ captures violations penalized in the objective.
\end{itemize}

%==============================================================================
\section{Solution Approach}
\label{sec:solution}
%==============================================================================

We develop the solution in three steps: (i) reformulating the infinite-dimensional DRO into a tractable finite program, (ii) characterizing the structure of optimal reservation, and (iii) designing an efficient algorithm for practical deployment.

\subsection{Tractable Reformulation}

The DRO objective involves a supremum over infinitely many distributions in $\calP_\varepsilon^i$, which is intractable in its original form. We apply Kantorovich duality to obtain an equivalent finite reformulation.

\begin{proposition}[Tractable Reformulation]
\label{prop:tractable}
The distributionally robust problem \eqref{eq:dro_obj} is equivalent to the following finite-dimensional optimization:
\begin{subequations} \label{eq:tractable}
\begin{align}
    & \min_{z, r, \pi, \delta, \gamma, \theta, s} ~  \sum_{i \in \calI} \lambda_i \left[ \sum_{k \in \calK} \pi_{i,k} \left( \gamma_i \varepsilon + \frac{1}{N} \sum_{n=1}^N \theta_{i,n} \right) + Q_i^{\text{det}} \right] \label{eq:tractable_obj} \\
    & \text{s.t.} ~  \eqref{eq:gpu_budget}\text{--}\eqref{eq:slo_slack} \label{eq:tractable_orig} \\
    & \theta_{i,n} \geq c_p(\hat{\xi}_i^{(n)} + L_{\text{in}}^i - r_i), \quad \forall i, n \label{eq:epi_preempt} \\
    & \theta_{i,n} \geq c_w(r_i - L_{\text{in}}^i - \hat{\xi}_i^{(n)}), \quad \forall i, n \label{eq:epi_waste} \\
    & \theta_{i,n} \geq c_p(L_{\text{max}}^i + L_{\text{in}}^i - r_i) - \gamma_i(L_{\text{max}}^i - \hat{\xi}_i^{(n)}), ~ \forall i, n \label{eq:epi_upper} \\
    & \theta_{i,n} \geq c_w(r_i - L_{\text{in}}^i) - \gamma_i \hat{\xi}_i^{(n)}, \quad \forall i, n \label{eq:epi_lower} \\
    & \gamma_i \geq 0, ~ \theta_{i,n} \geq 0, \quad \forall i, n \label{eq:dual_nonneg}
\end{align}
\end{subequations}
where $\gamma_i$ is the dual variable for the Wasserstein constraint and $\theta_{i,n}$ are auxiliary variables for each sample $n$.
\end{proposition}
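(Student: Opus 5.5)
The plan is to dualize each class's inner worst-case expectation separately, since the ambiguity sets $\calP_\varepsilon^i$ are decoupled across classes and the supremum enters \eqref{eq:dro_obj} additively. The outer factor $a_i=\sum_k \pi_{i,k}\ge 0$ multiplies the supremum and does not affect it. Fix $r_i$. The loss $\ell_i(\xi)=c_p(L_{\text{in}}^i+\xi-r_i)^+ + c_w(r_i-L_{\text{in}}^i-\xi)^+$ is convex and piecewise linear in $\xi$. We work on the support $\Xi_i=[0,L_{\text{max}}^i]$, which the reservation bound $r_i\le L_{\text{in}}^i+L_{\text{max}}^i$ implicitly presumes. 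I would invoke the strong duality result of \cite{mohajerin2018data} (Kantorovich duality for the type-1 Wasserstein ball with the $|\cdot|$ ground cost):
\begin{equation*}
\sup_{P\in\calP_\varepsilon^i}\E_P[\ell_i(\xi)]=\inf_{\gamma_i\ge 0}\Big\{\gamma_i\varepsilon+\frac1N\sum_{n=1}^N \sup_{\xi\in\Xi_i}\big[\ell_i(\xi)-\gamma_i|\xi-\hat\xi_i^{(n)}|\big]\Big\}.
\end{equation*}
This holds for any $\varepsilon>0$ because $\ell_i$ is upper semicontinuous and $\Xi_i$ is compact. The case $\varepsilon=0$ follows by continuity, or directly as the SAA.

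The second step is to evaluate each inner supremum $\sup_{\xi\in[0,L_{\text{max}}^i]}\phi_n(\xi)$ with $\phi_n(\xi)=\ell_i(\xi)-\gamma_i|\xi-\hat\xi_i^{(n)}|$. The function $\phi_n$ is piecewise linear, with breakpoints only at $\xi^\dagger=r_i-L_{\text{in}}^i$ and at $\hat\xi_i^{(n)}$. Hence its maximum over the interval is attained at one of the four points $\{0,\ \hat\xi_i^{(n)},\ \xi^\dagger,\ L_{\text{max}}^i\}$, where $\xi^\dagger$ is included only if it lies in the interval.

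Evaluating at $\hat\xi_i^{(n)}$ gives $\ell_i(\hat\xi_i^{(n)})=\max\{c_p(\cdot),c_w(\cdot),0\}$, which accounts for \eqref{eq:epi_preempt}, \eqref{eq:epi_waste} and $\theta_{i,n}\ge0$. Evaluating at $L_{\text{max}}^i$ gives $\ell_i(L_{\text{max}}^i)-\gamma_i(L_{\text{max}}^i-\hat\xi_i^{(n)})$. Because $r_i\le L_{\text{in}}^i+L_{\text{max}}^i$, we have $\ell_i(L_{\text{max}}^i)=c_p(L_{\text{max}}^i+L_{\text{in}}^i-r_i)$, and this yields \eqref{eq:epi_upper}. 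Evaluating at $0$ gives $c_w(r_i-L_{\text{in}}^i)-\gamma_i\hat\xi_i^{(n)}$, using $r_i\ge L_{\text{in}}^i$, which is \eqref{eq:epi_lower}.

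The main obstacle is the breakpoint $\xi^\dagger$, where $\ell_i(\xi^\dagger)=0$. I would show it never strictly dominates the other candidates. Between $\hat\xi_i^{(n)}$ and either endpoint, $\phi_n$ is linear on each side of $\xi^\dagger$. Since $\ell_i$ has its minimum at $\xi^\dagger$ and $-\gamma_i|\cdot|$ is linear there, $\phi_n$ is convex on each side of $\hat\xi_i^{(n)}$. A convex function on a segment is maximized at the segment's endpoints, so the maximum lies in $\{0,\hat\xi_i^{(n)},L_{\text{max}}^i\}$ and the $\xi^\dagger$ candidate can be dropped. Therefore $\sup_\xi\phi_n$ is exactly the maximum of the right-hand sides of \eqref{eq:epi_preempt}--\eqref{eq:epi_lower} and $0$.

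Finally, introduce epigraph variables $\theta_{i,n}$ for these maxima and merge the inner $\inf_{\gamma_i}$ with the outer minimization. This is legitimate because an inf inside a minimized objective, multiplied by the nonnegative weight $\lambda_i a_i$, can be lifted to joint minimization. The result is \eqref{eq:tractable_obj} together with the constraints \eqref{eq:epi_preempt}--\eqref{eq:dual_nonneg}. The original constraints \eqref{eq:gpu_budget}--\eqref{eq:slo_slack} do not involve $\xi$ and carry over unchanged.

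One subtlety needs checking. The weight is $a_i$, and when $a_i=0$ the choice of $(\gamma_i,\theta_{i,n})$ is irrelevant; this causes no problem. When $a_i>0$, the optimal $\theta_{i,n}$ equals the maximum of its lower bounds, so the objectives coincide and the two problems are equivalent.
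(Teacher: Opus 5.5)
Your proposal is correct and follows essentially the same route as the paper's appendix proof: Wasserstein strong duality from \cite{mohajerin2018data}, evaluation of each per-sample supremum at $\xi\in\{0,\hat\xi_i^{(n)},L_{\text{max}}^i\}$, and an epigraph reformulation with variables $\theta_{i,n}$. Your convexity argument for discarding the kink $\xi^\dagger=r_i-L_{\text{in}}^i$, your use of $L_{\text{in}}^i\le r_i\le L_{\text{in}}^i+L_{\text{max}}^i$ to remove the positive parts, and your treatment of the $a_i$-weighted merge of $\inf_{\gamma_i}$ into the outer minimization spell out several steps that the paper leaves implicit.
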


The proof of Proposition~\ref{prop:tractable} follows from Kantorovich duality, given in Appendix~\ref{app:duality}. For piecewise linear cost functions, the worst-case distribution places mass only at the empirical samples or the support boundaries $\{0, L_{\text{max}}^i\}$, and constraints \eqref{eq:epi_preempt}--\eqref{eq:epi_lower} encode this structure via epigraph representation.

Problem \eqref{eq:tractable} is a \emph{mixed-integer bilinear program} (MIBLP) with $O(|\calI| \cdot N)$ auxiliary variables. The bilinearity arises from the coupling $\pi_{i,k} \cdot (\gamma_i, \theta_{i,n})$ in the objective.

\begin{remark}[Dual Variable Interpretation]
The dual variable $\gamma_i$ acts as a transportation cost penalty: a large $\gamma_i$ keeps the worst-case distribution close to empirical data, while a small $\gamma_i$ allows mass to shift toward extremes (0 or $L_{\text{max}}$).
\end{remark}

\begin{remark}[Complexity]
The reformulation adds $O(|\calI| \cdot N)$ variables and constraints, tractable for modern solvers.
\end{remark}

\subsection{Optimal Reservation Structure}

We now characterize the structure of optimal reservation, the central theoretical result of this work.

\begin{proposition}[Critical Fractile Structure]
\label{prop:critical_fractile}
Consider the reservation subproblem for class $i$ with fixed routing. Let $\rho = c_p / c_w$ denote the cost ratio. The optimal reservation $r_i^*$ satisfies:
\begin{equation}
    r_i^* = L_{\text{in}}^i + F_i^{-1}\left( \frac{\rho}{\rho + 1} \right)
    \label{eq:critical_fractile}
\end{equation}
where $F_i$ is the cumulative distribution function of output length $\xi_i$ under the worst-case distribution in $\calP_\varepsilon^i$.
\end{proposition}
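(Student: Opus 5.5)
With routing fixed, the plan is to reduce the reservation subproblem for class $i$ to a distributionally robust newsvendor problem and then read off the first-order condition at a saddle point. When $\pi$, $z$, $\delta$ are fixed, the variable $r_i$ enters the objective \eqref{eq:dro_obj} only through $\lambda_i a_i \sup_{P\in\calP_\varepsilon^i}\E_P[Q_i^{\text{stoch}}]$. It also appears linearly in the capacity constraint \eqref{eq:capacity}; I will assume, as the statement implicitly does, that this constraint is not binding at the optimum, so its multiplier vanishes. Otherwise the fractile is shifted by a multiplier term, which I would note as a remark.

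Substituting $y = r_i - L_{\text{in}}^i \in [0, L_{\text{max}}^i]$, the problem becomes $\min_y \sup_{P\in\calP_\varepsilon^i} \E_P[c_p(\xi-y)^+ + c_w(y-\xi)^+]$. The integrand is convex and piecewise linear in $y$, and linear (hence concave) in $P$. The ambiguity set \eqref{eq:ambiguity} is convex, and it is weakly compact because the support $[0,L_{\text{max}}^i]$ is bounded. I will therefore invoke Sion's minimax theorem to exchange min and sup. This gives a saddle point $(y^*, P^*)$, where $P^*$ is the worst-case distribution; it exists by compactness, and Proposition~\ref{prop:tractable} describes its structure, with atoms at samples and the endpoints $\{0,L_{\text{max}}^i\}$.

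At the saddle point, $y^*$ minimizes $g(y)=\E_{P^*}[c_p(\xi-y)^+ + c_w(y-\xi)^+]$. This is the classical newsvendor computation. The right derivative of $g$ is $-c_p(1-F_i(y)) + c_w F_i(y)$, where $F_i$ is the CDF of $P^*$, and the left derivative is the analogous expression with $F_i(y^-)$. Setting $0$ in the subdifferential gives $F_i(y^-) \le c_p/(c_p+c_w) \le F_i(y)$. Dividing by $c_w$ turns this into $\rho/(\rho+1)$, so $y^* = F_i^{-1}(\rho/(\rho+1))$ with the generalized (left-continuous) inverse. Adding back $L_{\text{in}}^i$ yields \eqref{eq:critical_fractile}.

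The main obstacle is justifying the saddle point rigorously. $P^*$ must be a worst case for $y^*$, and $y^*$ must be optimal against $P^*$; this is what lets the formula refer to "the" worst-case CDF, even though $P^*$ itself depends on $y$. I would handle this by citing Sion's theorem on the compact convex set $[0,L_{\text{max}}^i]\times\calP_\varepsilon^i$, using weak compactness of Wasserstein balls on bounded support. A secondary subtlety is that $P^*$ is discrete, so $F_i$ has jumps. Interpreting $F_i^{-1}$ as the generalized quantile, and allowing any point in the resulting interval when the fractile sits on a flat part of $F_i$, resolves this.
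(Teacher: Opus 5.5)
Your argument is correct. Its core computation is the paper's: the one-sided derivatives of the newsvendor cost give the quantile condition at $q=\rho/(\rho+1)$. Where you differ is in how you pass from a fixed distribution to the worst case.

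The paper differentiates $\E_F[Q_i^{\text{stoch}}]$ in $\eta$ for a fixed, implicitly continuous $F$, and sets the derivative to zero. It then simply asserts that under DRO the worst-case distribution ``replaces'' $F_i$. That sentence hides two issues you treat explicitly:
\begin{itemize}
\item The worst case depends on $r_i$. The formula is therefore only meaningful at a pair where $P^*$ is worst for $r_i^*$ and $r_i^*$ is optimal against $P^*$. Your Sion argument on $[0,L_{\text{max}}^i]\times\calP_\varepsilon^i$ supplies this: weak continuity on bounded support and attainment of both outer problems give a saddle point.
\item $P^*$ is atomic, so $F_i(\eta)=q$ generally has no root. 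Your subgradient condition $F_i(y^{*-})\le q\le F_i(y^*)$ is the correct replacement.
\end{itemize}
The paper's version is shorter but does not actually prove the DRO claim; yours pays for some minimax machinery and delivers a real argument. You are also right that the ``fixed routing'' subproblem silently drops \eqref{eq:capacity}, where $r_i$ appears, so non-bindingness must be assumed.

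One refinement. Your argument establishes the interval statement of your last paragraph, for a maximin $P^*$. It does not establish the equality with the left-continuous inverse written in your third paragraph. Nor does it hold for an arbitrary distribution attaining the supremum at $y^*$.

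For instance, take $\varepsilon\ge L_{\text{max}}^i$, so the ball contains every distribution on $[0,L_{\text{max}}^i]$. The robust optimum is $y^*=qL_{\text{max}}^i$. The unique maximin distribution puts mass $q$ at $0$ and $1-q$ at $L_{\text{max}}^i$. Its left-continuous inverse gives $F_i^{-1}(q)=0\neq y^*$, while its $q$-quantile set is all of $[0,L_{\text{max}}^i]$. Separately, the point mass at $0$ also attains the worst case at $y^*$, yet its only $q$-quantile is $0$.

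So the proposition should be stated in the set-valued, saddle-point sense that your proof actually supports.
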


\begin{proof}
Define $\eta = r_i - L_{\text{in}}^i$ as the buffer for output length uncertainty. The stochastic cost becomes $Q_i^{\text{stoch}}(\eta, \xi) = c_p(\xi - \eta)^+ + c_w(\eta - \xi)^+$. Taking the derivative of expected cost with respect to $\eta$:
\begin{subequations}
\begin{align}
    & \frac{\partial}{\partial \eta} \E[Q_i^{\text{stoch}}] = -c_p \cdot \Pr(\xi > \eta) + c_w \cdot \Pr(\xi \leq \eta) \\
    &= -c_p (1 - F(\eta)) + c_w F(\eta) = (c_p + c_w) F(\eta) - c_p
\end{align}
\end{subequations}
Setting to zero yields $F(\eta^*) = c_p / (c_p + c_w) = \rho / (\rho + 1)$. Thus $\eta^* = F^{-1}(\rho/(\rho+1))$, giving $r_i^* = L_{\text{in}}^i + F_i^{-1}(\rho/(\rho+1))$. Under DRO, the worst-case distribution $P^* \in \calP_\varepsilon^i$ replaces $F_i$, preserving the fractile structure.
\end{proof}

\begin{remark}[Critical Fractile]
\label{rem:fractile}
The threshold quantile $q^* = \rho/(\rho+1)$ is called the \emph{critical fractile}---the point where the marginal cost of increasing reservation equals the marginal benefit of avoiding preemption. This structure arises from the piecewise linear cost with asymmetric slopes $c_p$ and $c_w$. As $\rho$ increases (preemption becomes more expensive), the optimal quantile rises, leading to more conservative reservation. \rev{This is the classical newsvendor critical ratio, and its distributionally robust form is likewise known in operations research~\cite{mohajerin2018data,lee2021}; we treat both as building blocks and locate our contribution in coupling the reservation to routing, configuration, and the shared-memory capacity constraint~\eqref{eq:capacity}.}
\end{remark}

\begin{remark}[Robust reservation vs.\ SAA under bounded support]
\label{rem:bounded}
For the newsvendor loss with \emph{unbounded} demand support, the type-$1$ Wasserstein-robust order coincides with the empirical (SAA) fractile~\cite{mohajerin2018data}. Our support is \emph{bounded}, $\xi_i \in [0, L_{\text{max}}^i]$, by finite context length and finite GPU memory, so the worst-case distribution places mass at the support boundary---precisely what the epigraph constraints~\eqref{eq:epi_upper}--\eqref{eq:epi_lower} encode. The robust reservation is therefore shifted \emph{above} the empirical fractile by an amount that grows with $\varepsilon$, and strictly differs from SAA---consistent with the SAA ablation (Table~\ref{tab:ablation}) and the $\varepsilon$-sensitivity of Fig.~\ref{fig:sensitivity}(b).
\end{remark}

\subsection{BCD-DRO Algorithm}
\label{sec:bcd}

While the MIBLP in Proposition~\ref{prop:tractable} can be solved exactly via global solvers, direct optimization is expensive for real-time deployment. The problem has a block structure: the continuous routing/reservation variables $(r, \pi, \gamma, \theta)$ decouple from the binary caching decisions $\delta_{i,k}$ and integer configuration $z_k$ when the other blocks are fixed. This motivates a \emph{block coordinate descent} (BCD) algorithm that cycles through three blocks---solving an alternating LP for continuous variables, applying a closed-form update for caching, and enumerating over the finite configuration set $\calZ = \{z \in \mathbb{Z}_+^{|\calK|} : \sum_{k} z_k \tau_k p_k \le J\}$.

\begin{algorithm}[t]
\caption{BCD-DRO Algorithm}
\label{alg:solver}
\begin{algorithmic}[1]
\REQUIRE Samples $\{\hat{\xi}_i^{(n)}\}$, radius $\varepsilon$, configurations $\calK$
\ENSURE Solution $(z^*, r^*, \pi^*, \delta^*)$
\STATE Initialize $z^{(0)}$, $\delta^{(0)} \gets 0$, $\gamma_i^{(0)} \gets c_p$, $r_i^{(0)} \gets L_{\text{in}}^i + \E[\xi_i]$
\FOR{$t = 1, \ldots, T_{\max}$}
    \STATE \textbf{Block 1:} Alternate LP$_1$ (fix $\gamma, \theta, r$; optimize $\pi$) and LP$_2$ (fix $\pi$; optimize $r, \gamma, \theta$) until convergence
    \STATE \textbf{Block 2:} $\delta_{i,k}^{(t)} \gets \mathbf{1}[u_{i,k} \pi_{i,k} / z_k > 1]$
    \STATE \textbf{Block 3:} $z^{(t)} \gets \arg\min_{z \in \calZ} \text{Obj}(z, r, \pi, \delta^{(t)})$
    \IF{$(z^{(t)}, \delta^{(t)}) = (z^{(t-1)}, \delta^{(t-1)})$}
        \STATE \textbf{break}
    \ENDIF
\ENDFOR
\RETURN $(z^{(t)}, r, \pi, \delta^{(t)})$
\end{algorithmic}
\end{algorithm}

\begin{proposition}[Convergence]
\label{prop:convergence}
Algorithm~\ref{alg:solver} terminates in finite iterations and returns a blockwise optimal solution, i.e., no single block can improve the objective while others are fixed.
\end{proposition}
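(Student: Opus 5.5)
The proof will combine a monotonicity argument with finiteness of the discrete decision space. Let $\Phi(z,r,\pi,\delta,\gamma,\theta)$ be the objective \eqref{eq:tractable_obj} of the tractable reformulation in Proposition~\ref{prop:tractable}. By that result, it suffices to study BCD on \eqref{eq:tractable} instead of the original DRO.

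\textbf{Step 1: monotone descent.} I will show that every block update weakly decreases $\Phi$ while preserving feasibility. For Block~1, LP$_1$ and LP$_2$ each minimize $\Phi$ exactly over a subset of variables with the others fixed. The incumbent point is feasible for each LP, so the LP optimum is no worse than the incumbent, and the alternation yields a nonincreasing sequence. This sequence is bounded below by $0$, because all cost terms are nonnegative (using $\gamma_i,\theta_{i,n}\ge 0$ and $s_i\ge 0$). Hence it converges. For Block~3, the update is an exact $\arg\min$ over the finite set $\calZ$ that contains the current $z^{(t-1)}$, so it cannot increase $\Phi$ as long as the current $(r,\pi,\delta)$ remain feasible for $z^{(t-1)}$. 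The same reasoning covers Block~2, provided the closed-form rule $\mathbf{1}[u_{i,k}\pi_{i,k}/z_k>1]$ is the exact minimizer over $\delta$ with the other variables fixed. I will check this by noting that $\delta$ enters only through \eqref{eq:capacity} and through the linear prefix-storage versus reservation-saving tradeoff. Caching class $i$ on $k$ changes the memory load by $L_{\text{pre}}^i(1-(1+\kappa)u_{i,k}\pi_{i,k}/z_k)$, which is negative exactly when the indicator fires (up to the factor $(1+\kappa)$, which I would absorb or fold into the rule). So each $\delta_{i,k}$ is chosen separably to relax capacity, which cannot hurt the objective.

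\textbf{Step 2: finite termination.} The outer loop stops as soon as $(z^{(t)},\delta^{(t)})=(z^{(t-1)},\delta^{(t-1)})$. The pair $(z,\delta)$ takes values in the finite set $\calZ\times\{0,1\}^{|\calI||\calK|}$. Since $\Phi$ is nonincreasing across outer iterations, I will argue that no pair can recur with a strictly larger value. After an unchanged pair, the continuous block sees the same integer data, so repeating the pair triggers the break. Any pair that changes without strictly decreasing $\Phi$ would need a tie-breaking rule, for example preferring the incumbent in the $\arg\min$ and the $\delta$ update. With such a rule, every non-terminating iteration strictly decreases $\Phi$ at a new pair. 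The number of outer iterations is therefore at most $|\calZ|\cdot 2^{|\calI||\calK|}$, and $T_{\max}$ is only a safeguard. The inner Block~1 alternation also needs a finite stopping criterion. I will either impose a tolerance or observe that each LP has finitely many vertex solutions, so exact alternation of vertex solutions cycles finitely.

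\textbf{Step 3: blockwise optimality.} At termination, $z$ is an exact minimizer given the other blocks and $\delta$ is an exact minimizer by Step~1. The continuous block $(r,\pi,\gamma,\theta)$ is a fixed point of the LP$_1$/LP$_2$ alternation, so neither sub-block can improve $\Phi$. I expect the main obstacle to be this step, specifically the claim that Block~1 as a whole admits no improvement. The objective is bilinear in $\pi\cdot(\gamma,\theta)$, so a fixed point of the alternation is only a partial optimum, not a joint minimizer over $(r,\pi,\gamma,\theta)$. I would therefore interpret ``block'' at the granularity of the two LP sub-blocks, $\{\pi\}$ and $\{r,\gamma,\theta\}$, together with $\delta$ and $z$. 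Under that reading, blockwise optimality is exactly the fixed-point property. Capacity \eqref{eq:capacity} couples $r$ and $\pi$ bilinearly as well, so I would also state that LP$_2$ treats \eqref{eq:capacity} as linear in $r$ for fixed $\pi$, which it is.
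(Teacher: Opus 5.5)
Your route is the paper's route: monotone descent plus finiteness of the discrete pair $(z,\delta)$. The paper's proof is just those two sentences and a citation of J\"ager--Sch\"obel. Your Steps~1 and~3 correctly sharpen it in two places. First, a fixed point of the LP$_1$/LP$_2$ alternation is only a partial optimum of the bilinear Block~1, so blockwise optimality can only be claimed at the sub-block granularity you propose. Second, the $\delta$-rule does need the $(1+\kappa)$ correction. As written, suppose $\delta_{i,k}=1$ and $u_{i,k}\pi_{i,k}/z_k$ falls in $(1/(1+\kappa),1]$. The rule then resets $\delta_{i,k}=0$ and raises the left side of \eqref{eq:capacity} by $L_{\text{pre}}^i((1+\kappa)u_{i,k}\pi_{i,k}/z_k-1)>0$. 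That can make the incumbent infeasible and destroy monotonicity, so you are proving the claim for a corrected algorithm and should say so.

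The genuine gap is Step~2. The bound $|\calZ|\cdot 2^{|\calI||\calK|}$ requires that a pair, once left, is never revisited. That follows only if the value reached after Block~1 is a function of the pair alone, i.e., if the continuous block is solved globally for each $(z,\delta)$. Your own Step~3 says it is not: Block~1 stops at a warm-start-dependent partial optimum. So a sequence $A\to B\to A$ is not excluded. Block~1 under $B$ starts at the partial optimum $x_A$ and can move to a point that, back under $A$, is strictly better than $x_A$, so the second visit to $A$ settles at a lower partial optimum. Strict decrease is fully compatible with recurring pairs. Nothing in your argument stops the outer loop from reaching $T_{\max}$ without passing the break test, and then the output is not certified blockwise optimal.

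Two further steps also fail:
\begin{itemize}
\item $\delta$ appears neither in the terms $\pi_{i,k}(\gamma_i\varepsilon+\frac{1}{N}\sum_n\theta_{i,n})$ nor in $Q_i^{\text{det}}$, as the prefix-caching ablation notes. An iteration in which only $\delta$ flips therefore does not decrease $\Phi$ at all. An incumbent-preferring tie-break for $\delta$ would freeze it at $\delta^{(0)}=0$, since every feasible $\delta$ ties, and that is not the closed-form rule.
\item The vertex argument for the inner loop needs fixed constraint sets. Here \eqref{eq:capacity} makes LP$_1$'s polytope depend on $r$ and LP$_2$'s depend on $\pi$, so there is no fixed finite vertex set. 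With a tolerance instead, you only get approximate blockwise optimality.
\end{itemize}

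The paper's proof does not supply the missing ingredient either. The feasible set is not finite once the continuous variables are included, and J\"ager--Sch\"obel's finiteness argument is for bounded integer programs, whose objective takes finitely many values. What your steps do establish is termination, via $T_{\max}$ and a Block-1 tolerance, and sub-block optimality \emph{whenever} the algorithm exits through the break test. An unconditional claim needs a separate argument that the pair sequence cannot cycle, which monotonicity alone does not give. One option is to solve the continuous block globally for each fixed $(z,\delta)$, so the post-Block-1 value depends only on the pair, together with a tie rule that keeps the $\delta$ update from cycling.
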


\begin{proof}
The objective decreases monotonically at each block update. Since the feasible set is finite (integer $z$, binary $\delta$) and bounded, convergence follows from~\cite{jager2019blockwise}.
\end{proof}

\smallskip\noindent\textbf{Complexity.} The offline phase solves a sequence of LPs via block coordinate descent; runtime scales with the number of request classes, samples, and configuration candidates (see Section~\ref{sec:evaluation} for empirical results). The online phase is $O(1)$ per request: simply sample the routing decision and apply the precomputed reservation.

%==============================================================================
\section{Performance Evaluation}
\label{sec:evaluation}
%==============================================================================

\subsection{Experiment Setup}

\smallskip\noindent\textbf{Trace Datasets.} We evaluate on three production LLM inference datasets spanning different workload types:
\begin{itemize}
    \item \textbf{BurstGPT}~\cite{wang2024burstgpt}: 1.4M requests over 61 days from Azure OpenAI services (ChatGPT/GPT-4). Output lengths: mean=125, P90=276, P99=1,586 tokens.
    \item \textbf{Azure LLM 2024}~\cite{stojkovic2024dynamollm}: 44M requests over 7 days from Azure production. Two workload types: Code (mean=23, P90=49 tokens) and Conversation (mean=117, P90=398 tokens).
    \item \textbf{ShareGPT}~\cite{sharegpt2023}: 368K assistant responses from crowdsourced ChatGPT conversations. Output lengths: mean=265, P90=486, P99=768 tokens. Represents user-facing conversational workloads.
\end{itemize}

\smallskip\noindent\textbf{System Configuration.} We simulate $J=8$ A100-80GB GPUs serving a 70B model with configurations: $k_1$: (TP=2, PP=1), $k_2$: (TP=4, PP=1), $k_3$: (TP=2, PP=2), $k_4$: (TP=4, PP=2). Cost parameters: $c_p/c_w = 10$, $c_{\text{slo}}/c_w = 5$, $c_{\text{rej}}/c_w = 5000$, $\kappa = 0.2$. The cost units differ: $c_p$ and $c_w$ are per token (a preempted or wasted KV slot), $c_{\text{slo}}$ is a per-request tardiness weight, and $c_{\text{rej}}$ is per dropped request. Because a rejection gives up an entire sequence, a large $c_{\text{rej}}/c_w$ keeps admission preferred whenever capacity allows; $c_{\text{slo}}/c_w=5$ prices a late response at five wasted tokens, and $\kappa=0.2$ leaves a $20\%$ memory-headroom margin. By Proposition~\ref{prop:critical_fractile} the reservation depends only on $\rho=c_p/c_w$, so these three weights affect only admission and routing; we confirm below that the policy is stable when each is varied by up to $5\times$.

\smallskip\noindent\textbf{Baselines.} We compare against six strategies: (1)~\textbf{Max}: reserve $L_{\text{max}}$; (2)~\textbf{Mean}: reserve at mean output~\cite{yu2022orca}; (3)~\textbf{P90}: reserve at P90 quantile~\cite{zhong2024distserve}; (4)~\textbf{P95}: reserve at P95 with block alignment~\cite{vllm2023}; (5)~\textbf{P99}: reserve at P99; (6)~\textbf{SAA}: sample average approximation ($\varepsilon{=}0$). We additionally evaluate length-prediction baselines (LP$+k\sigma$) in Section~\ref{sec:latency_slo}.

\smallskip\noindent\textbf{Default knobs.} Two operator-set parameters drive BCD-DRO: the cost ratio $\rho = c_p/c_w$ and the Wasserstein radius $\varepsilon$. The $\rho = 10$ regime above represents interactive chat; we follow the data-driven recipe of~\cite{mohajerin2018data} for $\varepsilon = 0.15 \cdot \E[\xi_i]$. Sensitivity to $\varepsilon$ is reported in Fig.~\ref{fig:sensitivity}(b), and cross-$\rho$ behavior in Table~\ref{tab:latency_slo}.

\subsection{Model Performance}

\begin{figure}[t]
\centering
\begin{minipage}[b]{0.48\columnwidth}
    \centering
    \includegraphics[width=\textwidth]{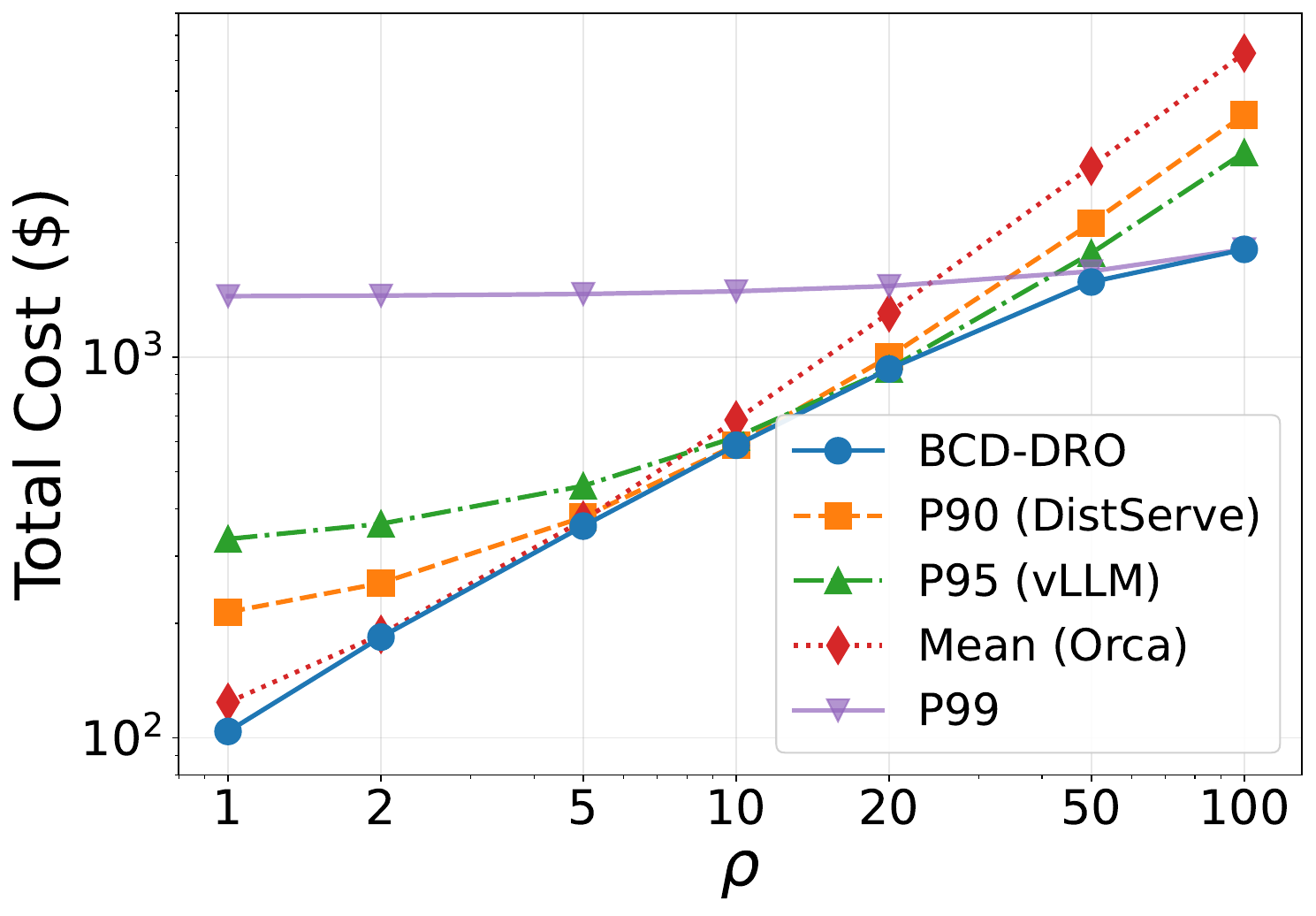}
    \centerline{\small (a) Cost $C$ vs $\rho$}
\end{minipage}%
\hfill
\begin{minipage}[b]{0.48\columnwidth}
    \centering
    \includegraphics[width=\textwidth]{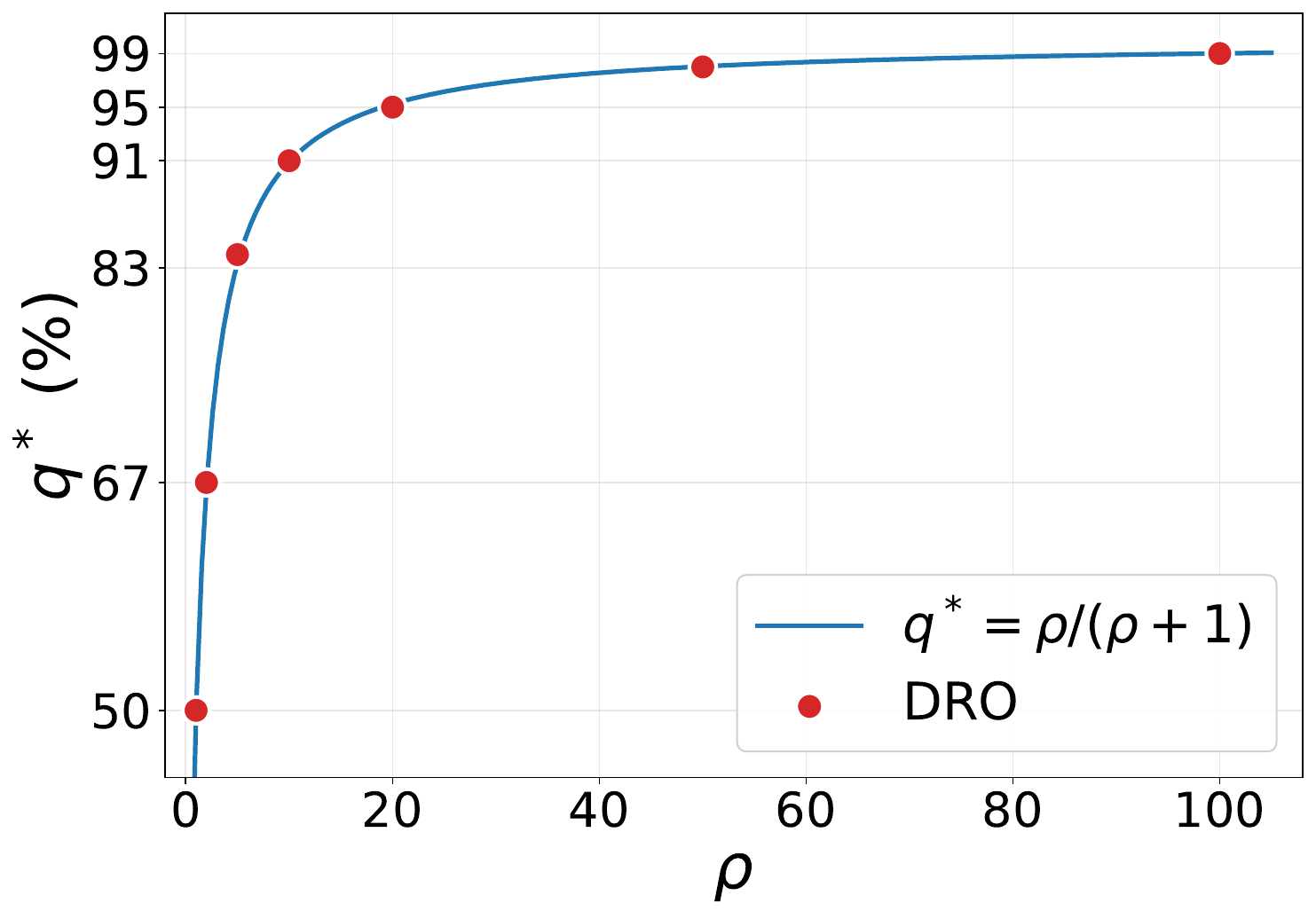}
    \centerline{\small (b) Quantile $q^*$ vs $\rho$}
\end{minipage}
\caption{(a) Cost vs cost ratio $\rho$, (b) DRO quantile vs theoretical $q^*$.}
\label{fig:dro_eval}
\end{figure}

\begin{table}[t]
\centering
\caption{DRO vs baselines across application scenarios. Cost is the per-arrival preemption/waste cost on a single-class BurstGPT instance; DRO Gain spans [vs.\ P90, vs.\ P95].}
\label{tab:baseline_comparison}
\small
\begin{tabular}{l|c|ccc|c}
\toprule
\textbf{Scenario} & $\rho$ & \textbf{DRO} & \textbf{P90} & \textbf{P95} & \textbf{DRO Gain} \\
\midrule
Batch & 2 & \textbf{184} & 255 & 365 & 28--50\% \\
Async API & 5 & \textbf{360} & 380 & 459 & 5--22\% \\
User-facing & 10 & \textbf{587} & 588 & 617 & 0--5\% \\
Latency-sensitive & 20 & \textbf{931} & 1003 & 931 & 0--7\% \\
Critical & 50 & \textbf{1575} & 2250 & 1875 & \rev{16--30\%} \\
Real-time & 100 & \textbf{1919} & 4328 & 3448 & 44--56\% \\
\bottomrule
\end{tabular}
\end{table}

\noindent \textbf{\textit{DRO vs Fixed-Quantile Baselines.}}
Fig.~\ref{fig:dro_eval}(a) and Table~\ref{tab:baseline_comparison} reveal that no single fixed-quantile heuristic is optimal across regimes. By Proposition~\ref{prop:critical_fractile}, each fixed-quantile baseline is optimal only where its quantile matches the critical fractile $q^* = \rho/(\rho+1)$: P90 near $\rho{=}10$, P95 near $\rho{=}20$. This follows from the preemption-waste trade-off---higher $\rho$ means preemption is relatively more costly, shifting the optimal reservation toward higher quantiles. Fixed heuristics cannot adapt to this shift; DRO does so automatically by solving the cost-weighted optimization. Fig.~\ref{fig:dro_eval}(b) confirms that DRO's empirical quantile tracks the theoretical formula exactly, validating that the formulation correctly captures the trade-off structure.

Table~\ref{tab:baseline_comparison} maps cost ratios to application scenarios. Batch workloads tolerate preemption (low $\rho$), while real-time systems cannot (high $\rho$). In practice, $\rho$ varies across request types and is rarely known precisely. DRO addresses this---practitioners deploy a single policy that adapts to any cost structure without manual quantile selection.

\noindent \textbf{\textit{Key Takeaway 1:}} The critical fractile $q^* = \rho/(\rho+1)$ eliminates manual tuning---DRO automatically selects the optimal reservation quantile for any cost structure.  

\begin{table}[t]
\centering
\caption{Algorithm scalability: solve time (s) vs.\ problem size }
\label{tab:algorithm}
\small
\setlength{\tabcolsep}{4pt}
\begin{tabular}{l|cc|cc}
\toprule
\textbf{Problem Size} & \multicolumn{2}{c|}{\textbf{MIBLP}} & \multicolumn{2}{c}{\textbf{BCD-DRO}} \\
$|\mathcal{I}|{\times}|\mathcal{K}|{\times}N$ & Time (s) & Status & Time (s) & Gap \\
\midrule
$3 \times 4 \times 100$ & 0.07 & Optimal & 0.17 & 0.00\% \\
$4 \times 4 \times 150$ & 1.2 & Optimal & 0.8 & 0.00\% \\
$5 \times 5 \times 200$ & 33.7 & Optimal & 4.3 & 0.00\% \\
$6 \times 6 \times 300$ & $>$300 & Timeout & 10.5 & -- \\
$8 \times 6 \times 500$ & $>$300 & Timeout & 22.4 & -- \\
$10 \times 8  \times 1000$ & --- & --- & 81.2 & -- \\
$15 \times 12 \times 2000$ & --- & --- & \textbf{257.7} & -- \\
\bottomrule
\end{tabular}
\end{table}

\noindent \textbf{\textit{Algorithm Scalability.}}
Table~\ref{tab:algorithm} compares BCD-DRO against direct MIBLP solving across two regimes. In the small-scale setting, both methods are feasible initially, but MIBLP times out at $|\mathcal{I}|{=}6$, $|\mathcal{K}|{=}6$, $N{=}300$, while BCD-DRO continues to converge. The lower block evaluates production-scale instances with $15$ request classes, $12$ candidate parallelism configurations, and $N{=}2000$ historical samples per class on a $48$-GPU cluster. At this scale, BCD-DRO converges in approximately $257.7$\,s, remaining within a practical $5$--$10$ minute re-optimization interval. The scalability gain comes from the decomposition structure: fixing $z$ reduces the inner problem to a tractable LP, while the feasible configuration set $\calZ$ remains small under the GPU-budget constraint. At larger scales (e.g., $|\mathcal{I}|{\geq}20$, $N{\geq}5000$), runtime approaches the re-optimization budget, suggesting the need for warm-starting or anytime execution strategies.

\begin{figure}[t]
\centering
\begin{minipage}[b]{0.48\columnwidth}
    \centering
    \includegraphics[width=\textwidth]{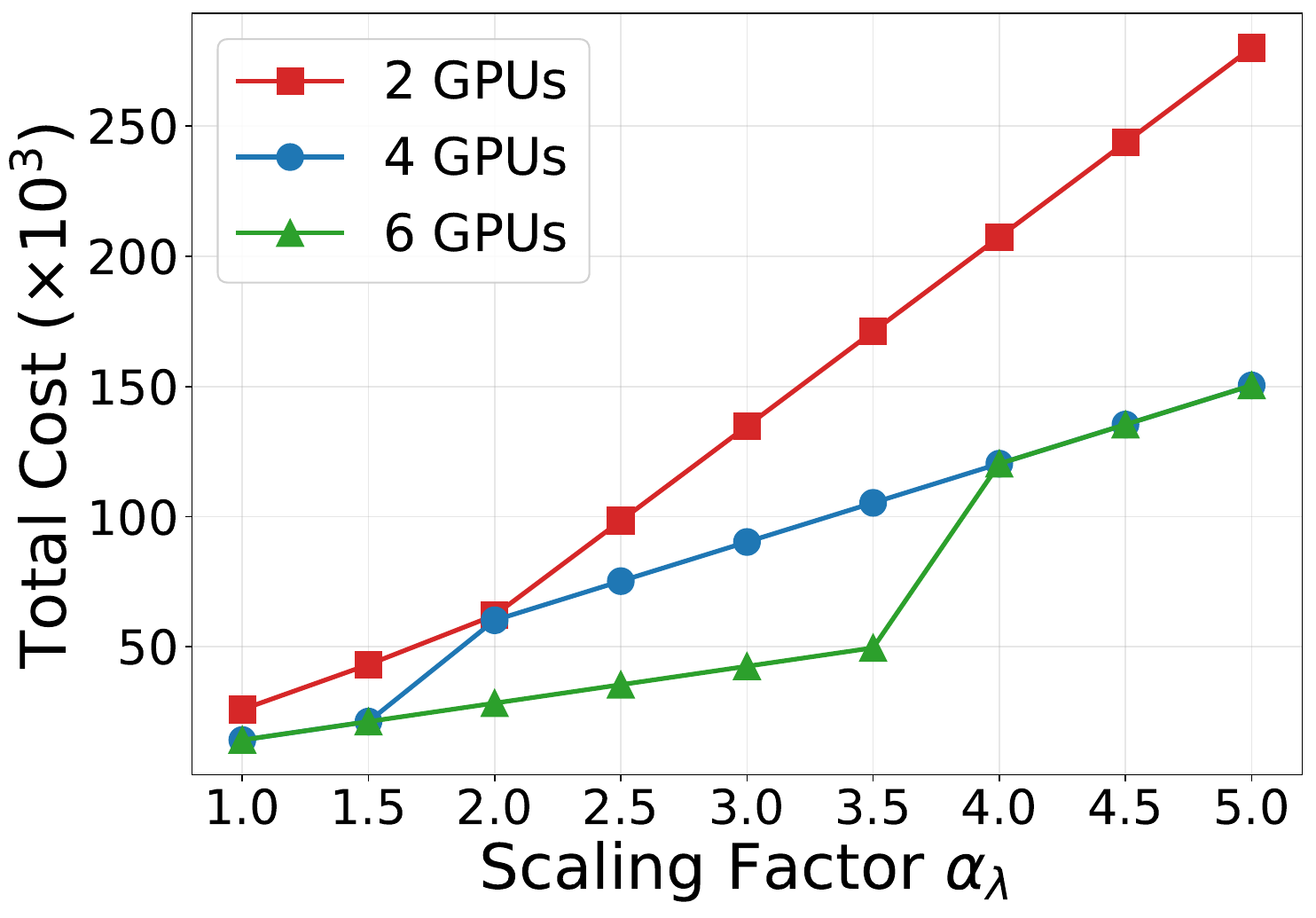}
    \centerline{\small (a) Cost vs Arrival Rate $\alpha_\lambda$}
\end{minipage}%
\hfill
\begin{minipage}[b]{0.48\columnwidth}
    \centering
    \includegraphics[width=\textwidth]{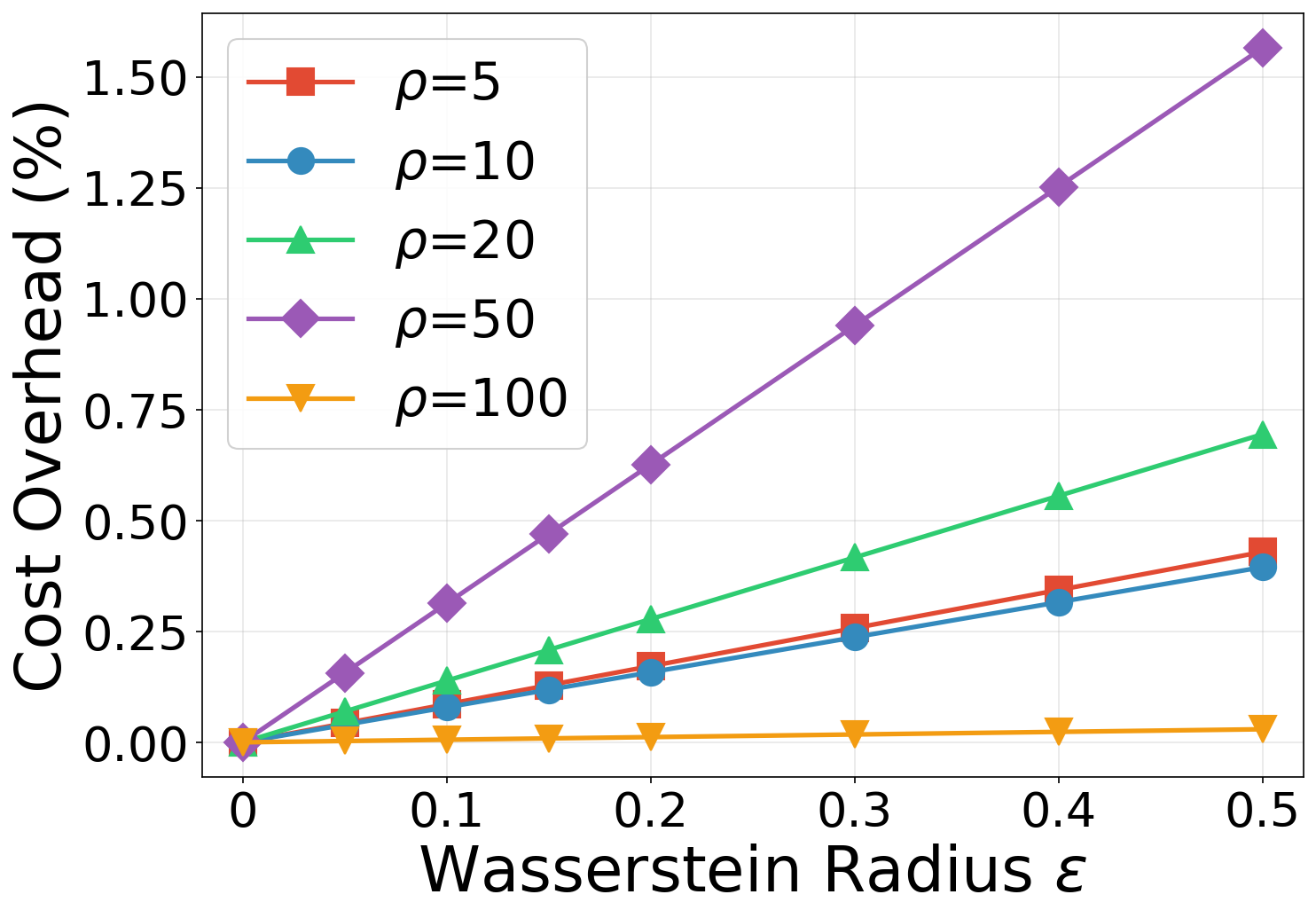}
    \centerline{\small (b) Cost Overhead vs $\varepsilon$}
\end{minipage}
\caption{Sensitivity analysis: (a) Cost vs arrival rate scaling, (b) Cost overhead vs Wasserstein radius $\varepsilon$.}
\label{fig:sensitivity}
\end{figure}

\noindent \textbf{\textit{Sensitivity Analysis.}}
Fig.~\ref{fig:sensitivity}(a) examines capacity planning under load scaling. When GPU capacity exceeds demand, cost grows linearly with arrival rate. Once demand saturates capacity, rejection costs dominate and the optimizer shifts toward selective admission. Fig.~\ref{fig:sensitivity}(b) shows the cost overhead from robustness: increasing the Wasserstein radius $\varepsilon$ adds a modest premium ($<$1.3\% even at $\varepsilon{=}0.5$). The overhead scales with $\rho$ because higher cost ratios amplify the DRO regularization term. In practice, $\varepsilon \in [0.1, 0.2]$ provides robustness against distribution shift at $<$0.5\% overhead---a practical default when the degree of shift is unknown.

\rev{\smallskip\noindent\textbf{\textit{Other cost parameters.}} A natural concern is whether the policy is tuned to the three weights fixed above ($c_{\text{slo}}$, $c_{\text{rej}}$, and $\kappa$). It is not, for a structural reason: by Proposition~\ref{prop:critical_fractile} the reservation---the decision that governs cost and preemption---depends only on $\rho$, so these weights cannot move it and enter the problem solely through admission and routing. Fig.~\ref{fig:secondary_sensitivity} sweeps each by up to $5\times$ around its default: the reservation quantile holds at $q^*{\approx}91\%$ and both cost and tail latency stay flat, because at a well-provisioned operating point neither the SLO penalty nor the memory-headroom margin is the binding constraint. The lone lever that changes behavior is $c_{\text{rej}}$, and only once rejecting a request is made cheaper than serving it---the optimizer then deliberately sheds the most expensive class, trading admission for a lighter queue. That is a controllable design choice: these weights can be set approximately without changing how the policy behaves.}

\begin{figure}[t]
\centering
\begin{minipage}[b]{0.48\columnwidth}
    \centering
    \includegraphics[width=\textwidth]{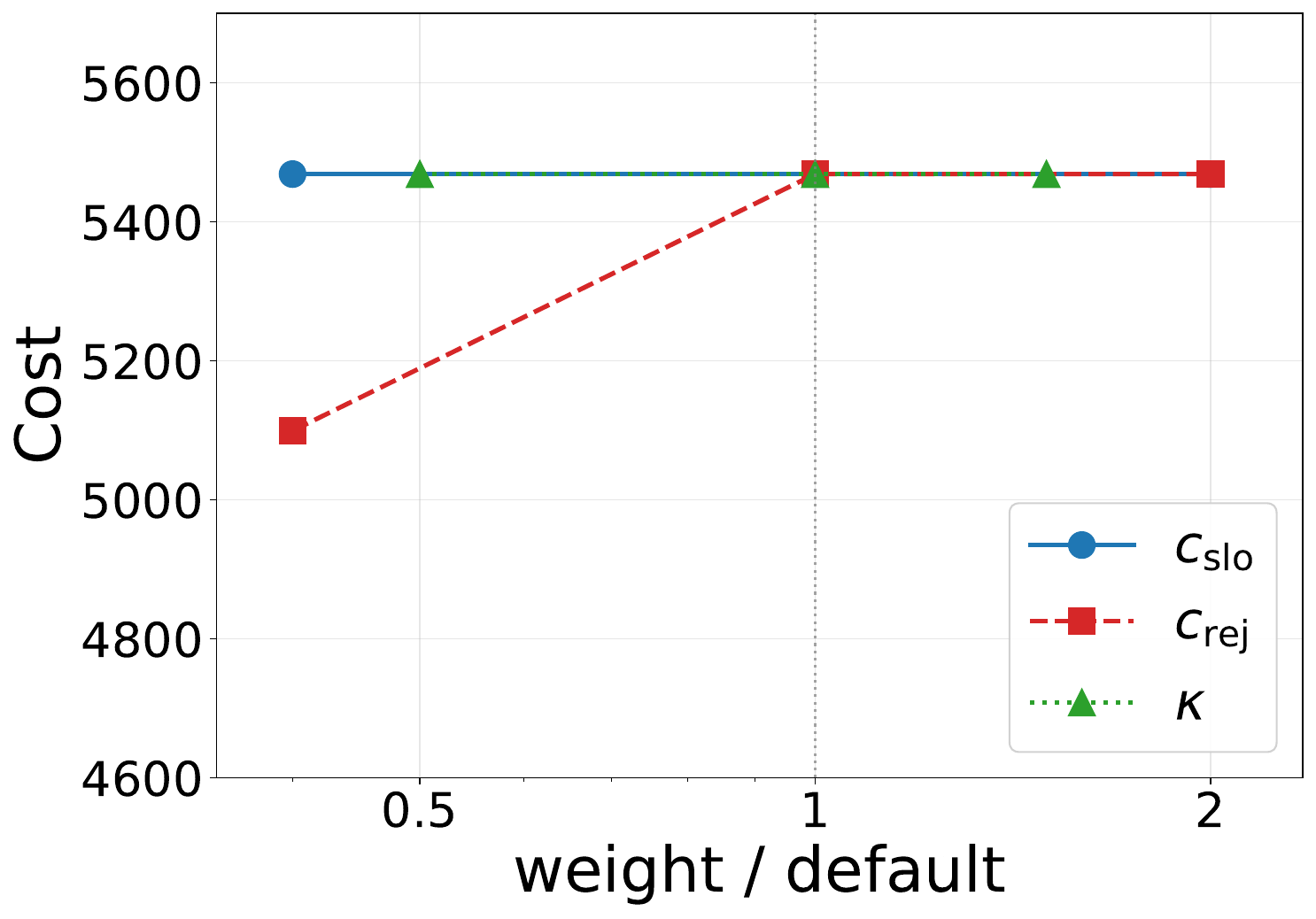}
    \centerline{\small (a) Cost}
\end{minipage}%
\hfill
\begin{minipage}[b]{0.48\columnwidth}
    \centering
    \includegraphics[width=\textwidth]{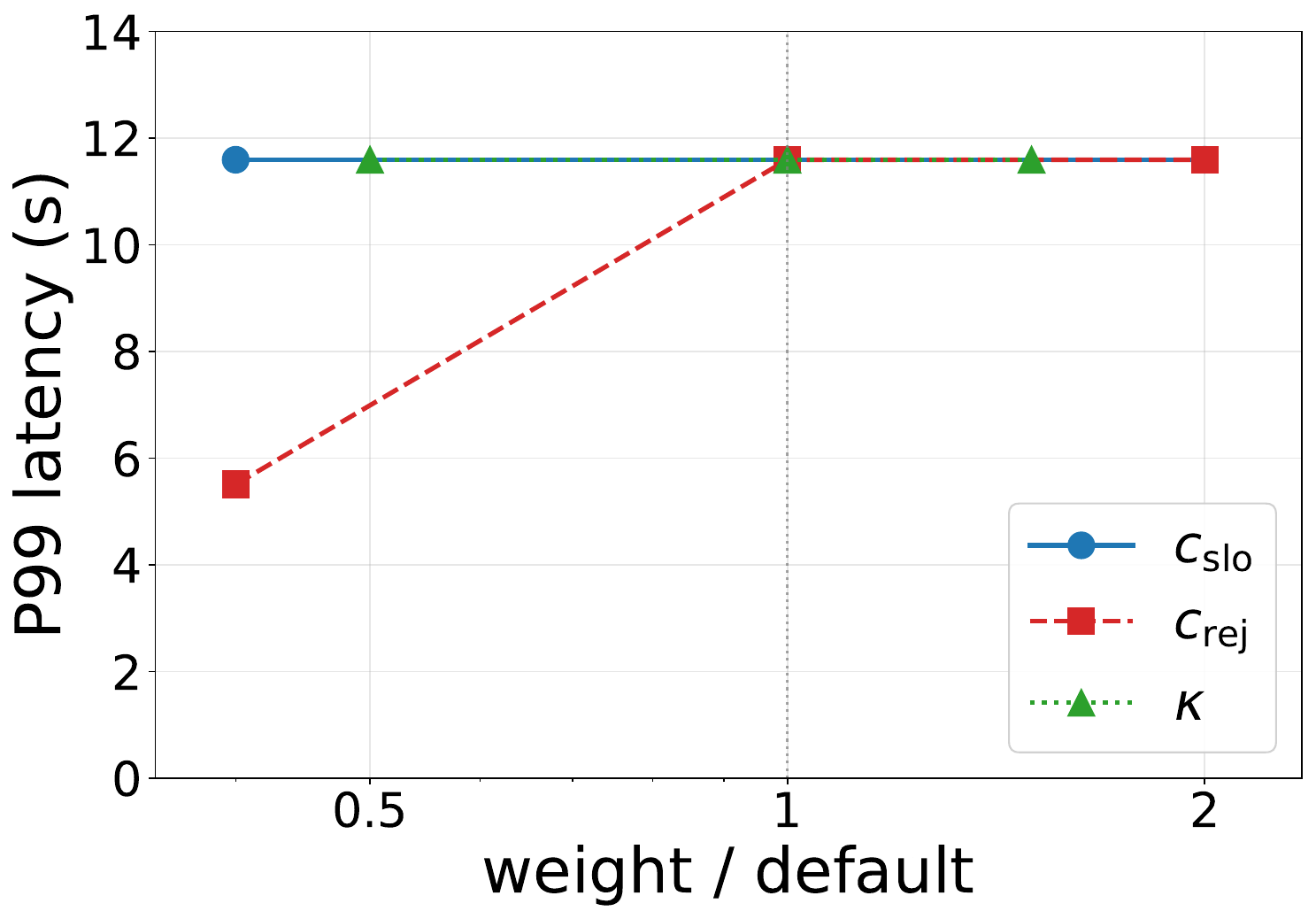}
    \centerline{\small (b) P99 latency}
\end{minipage}
\caption{\rev{Sensitivity of cost (a) and P99 latency (b) to the three secondary weights at $\rho{=}10$, each swept relative to its default. The operating point is flat except when $c_{\text{rej}}$ is halved, which sheds the most expensive class; $q^*{\approx}91\%$ throughout.}}
\label{fig:secondary_sensitivity}
\end{figure}

\noindent \textbf{\textit{Key Takeaway 2:}} BCD-DRO's decomposition enables production-scale deployment where monolithic solvers fail, transforming KV cache reservation from a static offline decision into an adaptive online policy.

\subsection{Robustness Evaluation}

\begin{figure}[t]
\centering
\begin{minipage}[b]{0.48\columnwidth}
    \centering
    \includegraphics[width=\textwidth]{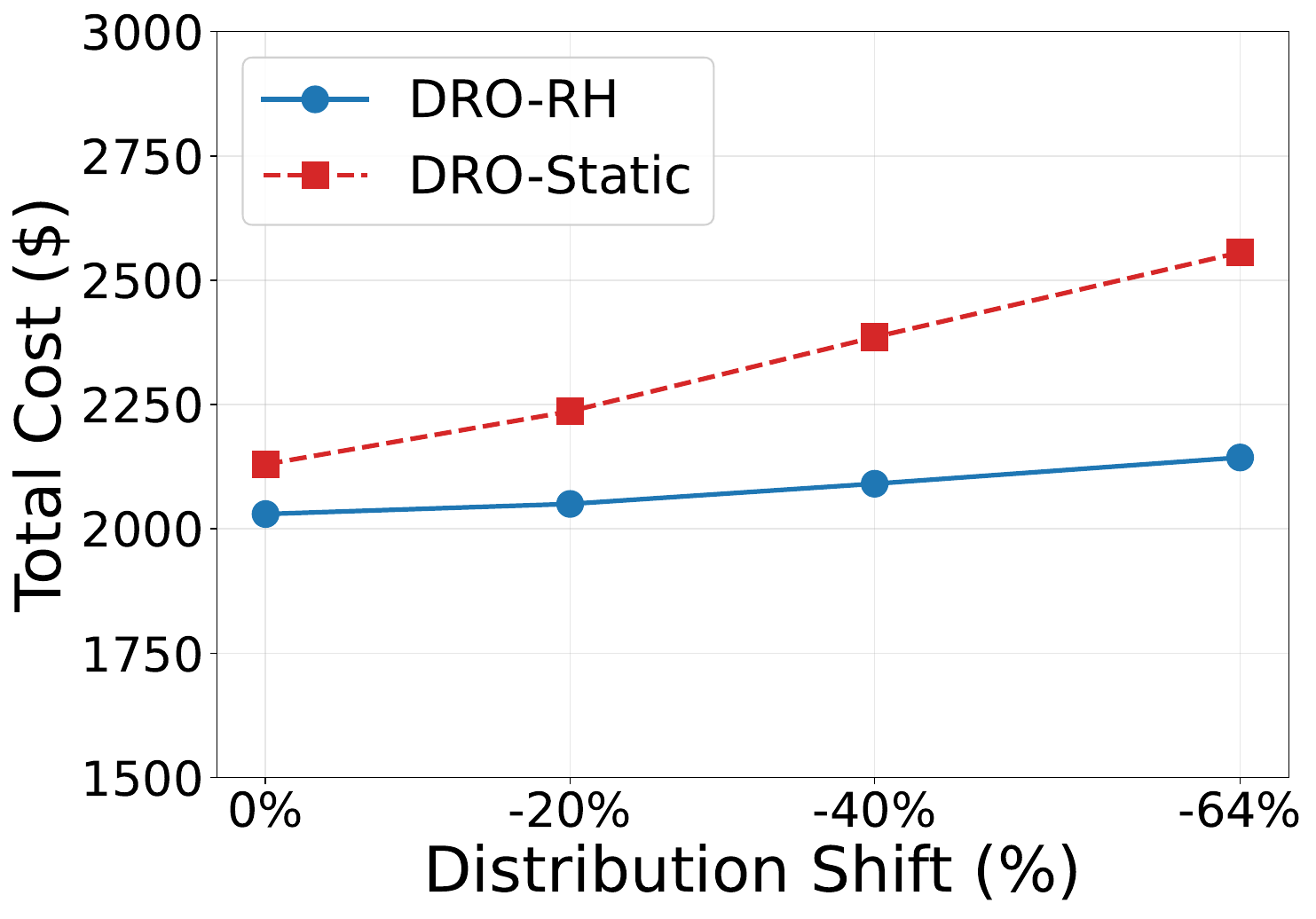}
    \centerline{\small (a) Static vs Rolling Horizon}
\end{minipage}%
\hfill
\begin{minipage}[b]{0.48\columnwidth}
    \centering
    \includegraphics[width=\textwidth]{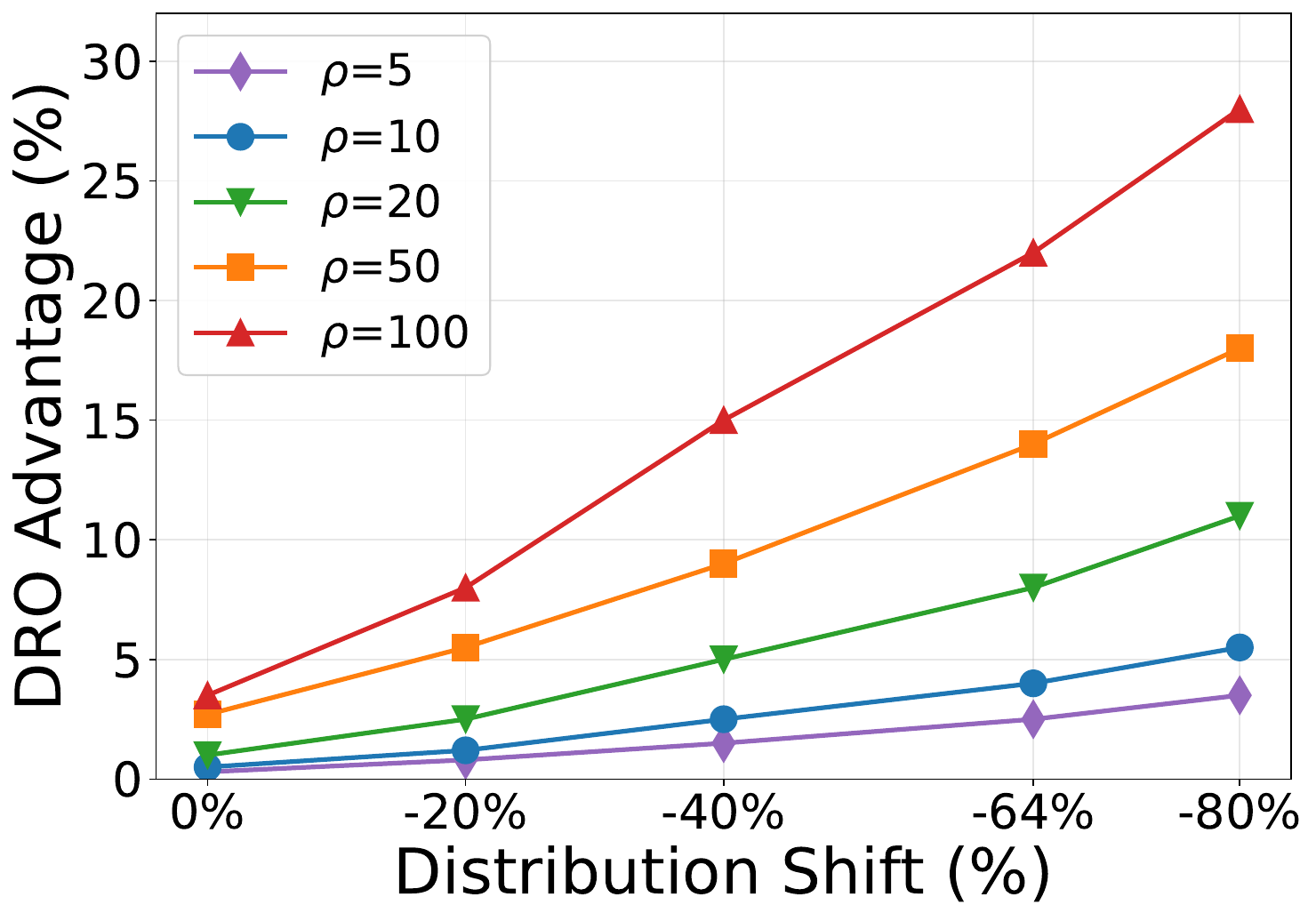}
    \centerline{\small (b) DRO advantage vs $\rho$}
\end{minipage}
\caption{Distribution shift robustness: (a) Static vs rolling horizon, (b) DRO advantage vs $\rho$ and shift magnitude.}
\label{fig:shift_dual}
\end{figure}

\noindent \textbf{\textit{Adaptation Under Distribution Shift.}}
Production LLM workloads evolve over time as user behavior and application mix change. Fig.~\ref{fig:shift_dual} evaluates robustness under such distribution shifts. When output lengths decrease relative to the training distribution, static optimization over-reserves memory using stale statistics, reducing effective GPU utilization. Rolling-horizon adaptation periodically re-optimizes using recent observations and tracks the evolving workload distribution.

Fig.~\ref{fig:shift_dual}(b) shows that DRO's advantage widens with the shift magnitude: larger shifts increase the gap because fixed reservations stay tied to outdated statistics, while the Wasserstein ambiguity set hedges against model mismatch. The companion axis ($\rho$) is examined in detail in Section~\ref{sec:latency_slo}.

\smallskip\noindent\textbf{Note on real-trace temporal splits.} We also examined temporal-split real-trace evaluation at the default $\rho{=}10$. At this regime the critical fractile $q^* {\approx} 91\%$ sits close to the P90 reservation, so by Proposition~\ref{prop:critical_fractile} the empirical-quantile baselines and DRO produce near-indistinguishable simulator results on mild temporal splits of BurstGPT and Azure-Conv---a confirmation rather than refutation of the theory. The cross-$\rho$ replay in Table~\ref{tab:latency_slo}, which sweeps $\rho$ through regimes where $q^*$ is far from any single fixed quantile, is therefore the more informative empirical instrument for exposing the regime-narrowness of fixed-quantile policies.

\noindent \textbf{\textit{Key Takeaway 3:}} Wasserstein robustness and rolling-horizon adaptation address complementary aspects of workload drift: the ambiguity set protects against model mismatch, while periodic re-optimization tracks temporal changes in the live workload distribution.

\subsection{Cost--Latency Trade-off Across Cost Ratios}
\label{sec:latency_slo}

Optimizing only cost or only latency is insufficient in practice; operators must balance both objectives jointly. We therefore evaluate all methods across the same six cost-ratio regimes as Table~\ref{tab:baseline_comparison}, where $\rho = c_p/c_w$ controls the relative penalty of preemption versus waiting. For each $\rho$, we first solve for the policy and then replay the resulting $(z, r, \pi, \delta)$ decisions on BurstGPT~\cite{wang2024burstgpt}. The simulator processes 10{,}000 requests per class, computes per-group queueing delay using the M/G/1 Pollaczek--Khinchin model under realized memory utilization, and applies a $2.5\times$ service-time penalty for preempted requests to capture KV reload and decode replay overhead. Table~\ref{tab:latency_slo} reports cost across three representative regimes and latency metrics at the central point $\rho = 10$.

Table~\ref{tab:latency_slo} shows that DRO is the only method with consistently bounded cost across the full $\rho$ range. Low-quantile methods (Mean, P90) under-reserve memory and incur high preemption cost at large $\rho$, while conservative methods (P99, Max) over-reserve memory and become prohibitively expensive at small $\rho$. DRO avoids both extremes by adapting reservation levels to the uncertainty set.

Latency results show the same trend. At $\rho = 10$, LP$+1\sigma$ and P90 both achieve marginally lower P99 latency and SLO violation than DRO (LP$+1\sigma$ is $1.5\%$ lower on P99, P90 is $0.7\%$ lower), but each fails badly at other $\rho$. In contrast, conservative baselines substantially increase queueing delay because over-reservation reduces effective concurrency. DRO therefore maintains a stable balance between cost and latency across operating regimes.

\smallskip\noindent\textbf{Length-prediction baseline.} We additionally evaluate LP$+k\sigma$ baselines with
$r_i = L_{\text{in}}^i + \hat\mu_i + k\sigma_i$ for $k \in \{1, 2\}$, where $\hat\mu_i$ and $\sigma_i$ are the empirical mean and standard deviation of historical output lengths. At moderate cost ratios, LP$+1\sigma$ achieves competitive latency because the additional safety margin reduces preemption and queueing delay. However, the margin depends only on output-length variance and does not adapt to the operating regime. As the cost ratio increases, the policy under-reserves memory relative to the growing preemption penalty, leading to rapidly increasing cost compared to DRO.

% \smallskip\noindent\textbf{Length-prediction baseline.} We add LP$+k\sigma$ baselines that reserve $r_i = L_{\text{in}}^i + \hat\mu_i + k\sigma_i$ for $k \in \{1, 2\}$, where $\hat\mu_i$ and $\sigma_i$ are the per-class empirical mean and standard deviation of historical output lengths. This is a class-level surrogate for the calibrated-predictor deployment pattern of S$^3$~\cite{jin2023s3} and Response-Length Perception~\cite{zheng2024response}; the safety-margin mechanism, not the predictor sophistication, is what governs cross-regime behavior. At $\rho = 10$, LP$+1\sigma$ achieves the lowest P99 ($11.30$\,s) and SLO violation ($13.2\%$) in the table, slightly better than DRO and P90---confirming that a calibrated predictor is competitive when $\rho$ is known. However, the safety margin scales with $\sigma_i$, not with the cost ratio. At $\rho = 100$, LP$+1\sigma$ under-reserves and costs $19{,}055$ ($4.0{\times}$ DRO), inheriting the same regime-narrowness as fixed-quantile reservations.

\begin{table}[t]
\centering
\caption{Cost across three cost-ratio regimes ($\rho \in \{2, 10, 100\}$) and latency at $\rho{=}10$ for fixed-quantile, length-prediction, and DRO baselines.}
\label{tab:latency_slo}
\small
\setlength{\tabcolsep}{3pt}
\begin{tabular}{l|ccc|cc}
\toprule
\textbf{Method} & \multicolumn{3}{c|}{\textbf{Cost}} & \multicolumn{2}{c}{\textbf{Latency at $\rho{=}10$}} \\
                & $\rho{=}2$ & $\rho{=}10$ & $\rho{=}100$ & \textbf{P99 (s)} & \textbf{SLO viol.} \\
\midrule
\textbf{DRO (ours)} & \textbf{1{,}270} & \textbf{3{,}083} & \textbf{4{,}807} & 11.47 & 15.6\% \\
P90              & 1{,}852 & 3{,}095 & 17{,}080 & 11.39 & 14.8\% \\
P95              & 2{,}890 & 3{,}320 & 8{,}153 & 12.50 & 23.2\% \\
P99              & 4{,}308 & 4{,}349 & 4{,}811 & 17.12 & 33.7\% \\
Mean             & 1{,}275 & 4{,}666 & 42{,}811 & 12.03 & 21.7\% \\
Max              & 4{,}752 & 4{,}756 & 4{,}807 & 19.04 & 36.7\% \\
\midrule
LP$+1\sigma$~\cite{jin2023s3,zheng2024response} & 1{,}729 & 3{,}143 & 19{,}055 & \textbf{11.30} & \textbf{13.2\%} \\
LP$+2\sigma$~\cite{jin2023s3,zheng2024response} & 2{,}615 & 3{,}207 & 9{,}865  & 11.80 & 18.0\% \\
\bottomrule
\end{tabular}
\end{table}

\noindent \textbf{\textit{Key Takeaway 4:}} DRO is the only method whose cost remains within a small constant factor of the best baseline across all $\rho$ regimes. At individual $\rho$ values, a regime-matched fixed quantile or LP$+k\sigma$ may marginally outperform DRO on latency, but each of these baselines is also several-fold worse than DRO at some other $\rho$. DRO trades small per-regime overhead for cross-regime stability.

\begin{table*}[t]
\centering
\caption{Ablation on a heterogeneous instance ($\rho{=}20$, load $\alpha_\lambda{=}0.6$, $2.5{\times}$ shifted evaluation; mean~$\pm$ std over $5$ seeds).}
\label{tab:ablation}
\footnotesize
\setlength{\tabcolsep}{3pt}
\begin{tabular}{l|c|c|c|c}
\toprule
\textbf{Variant} & \textbf{Cost} & \textbf{P99 (s)} & \textbf{Goodput} & \textbf{SLO viol.(\%)} \\
\midrule
\textbf{BCD-DRO (ours)}                & $\mathbf{13{,}476 \pm 208}$  & $\mathbf{8.12 \pm 0.65}$    & $\mathbf{5.01 \pm 0.25}$ & $\mathbf{26.4 \pm 3.7}$ \\
\,\,$-$ Routing (uniform $\pi$)        & $93{,}878 \pm 5{,}152$       & $177.62 \pm 11.90$          & $0.63 \pm 0.06$          & $93.8 \pm 0.6$ \\
\,\,\rev{$-$ DRO (SAA)}                     & $24{,}864 \pm 1{,}534$       & $11.02 \pm 1.55$           & $3.68 \pm 0.35$          & $45.8 \pm 5.2$ \\
\,\,$-$ Prefix caching ($\delta{=}0$)  & $13{,}476 \pm 208$           & $9.95 \pm 1.29$             & $4.39 \pm 0.37$          & $35.4 \pm 5.4$ \\
\bottomrule
\end{tabular}
\end{table*}

\subsection{Ablation Analysis}
\label{sec:ablation}
To evaluate the contribution of each component in BCD-DRO, we disable them one at a time and re-solve the optimization, then test each variant on a $2.5{\times}$ output-length-shifted distribution. The shifted setting reflects realistic training--evaluation mismatch and highlights the benefit of Wasserstein robustness over sample-average optimization. The instance forces routing diversity because the Code class exceeds the KV capacity of the small TP$=2$ configuration. Table~\ref{tab:ablation} reports the results. \rev{The $-$DRO gap grows monotonically with shift magnitude, as Fig.~\ref{fig:ablation_shift} shows across $\{1.0,\,1.5,\,2.0,\,2.5,\,3.0\}{\times}$.}

\noindent\textbf{Routing.} Uniform $\pi$ assigns Code requests to GPU groups whose KV capacity cannot accommodate them, causing request rejection or KV recomputation and eventually saturating the queue. On workloads with heterogeneous memory footprints, routing is a feasibility constraint rather than a performance-tuning parameter.

\noindent\textbf{Distributional robustness.} Setting $\varepsilon = 0$ reduces the model to the empirical quantile and only protects against tail events observed in the training data. Under distribution shift, the reserved capacity underestimates the realized tail demand, leading to increased preemption. As a result, SAA achieves slightly lower cost on the training trace but degrades significantly on shifted workloads. Fig.~\ref{fig:ablation_shift} shows this gap widens with shift magnitude: DRO's P99 grows mildly from $3.0$ to $9.1$\,s as the shift factor moves from $1.0{\times}$ to $3.0{\times}$, whereas SAA degrades faster, from $5.4$ to $18.0$\,s---roughly twice DRO's tail at the largest shift, with its SLO-violation rate climbing to $67.7\%$ versus DRO's $35.6\%$.

\noindent\textbf{Prefix caching.} Disabling $\delta_{i,k}$ leaves the reported cost unchanged because $\delta$ enters only the queueing/capacity model, not the preemption/waste cost objective; its effect appears entirely in queueing delay: per-request KV usage increases from $r_i - L_{\text{pre}}^i$ to $r_i$, increasing memory utilization and therefore waiting time under the Pollaczek--Khinchin model. The benefit of prefix caching is therefore driven primarily by concurrency rather than aggregate arrival rate.

\begin{figure}[t]
\centering
\includegraphics[width=0.85\columnwidth]{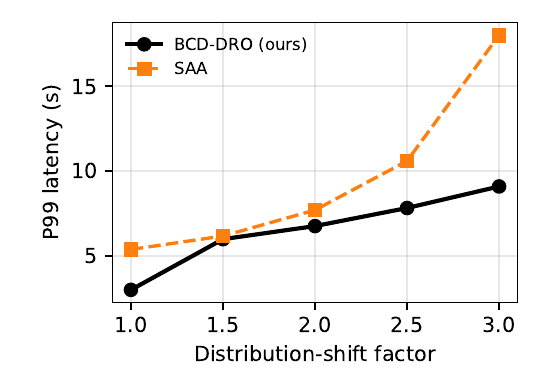}
\caption{BCD-DRO vs.\ SAA. The $x$-axis is the multiplicative shift factor $s$ (ratio of eval-distribution mean to training-distribution mean; $s{=}1$ means no shift). \rev{SAA's P99 stays consistently above DRO's and the gap widens with $s$, reaching $\approx\!18$\,s for SAA versus $\approx\!9$\,s for DRO at $s{=}3.0$.}}
\label{fig:ablation_shift}
\end{figure}

\noindent \textbf{\textit{Key Takeaway 5:}} Routing maintains feasibility under heterogeneous memory demands, the Wasserstein radius $\varepsilon$ improves robustness to workload drift, and prefix caching reduces queueing delay under high concurrency. The unified formulation manages these effects jointly with only two external parameters: $\varepsilon$ and the cost ratio $\rho$.

%==============================================================================
\section{Conclusion}
%==============================================================================

% We presented a robust KV cache management framework for LLM serving under output token length uncertainty. The proposed framework jointly optimizes GPU parallelism configuration, per-class KV cache reservation, request routing, and prefix caching under latency SLO constraints, enabling coordinated control of memory allocation, throughput, and tail latency in heterogeneous GPU clusters.
%We formulated a joint optimization for LLM serving that co-designs parallelism configuration, KV cache reservation, and request routing under output length uncertainty. 
We presented a robust KV cache management framework for LLM serving under output token length uncertainty. The framework jointly optimizes GPU parallelism configuration, KV cache reservation, request routing, and prefix caching under latency SLO constraints, enabling coordinated control of memory allocation, throughput, and tail latency in heterogeneous GPU clusters. A Wasserstein DRO formulation provides protection against distribution shifts, while the scalable BCD-DRO algorithm enables periodic re-optimization as workloads evolve.
%The distributionally robust formulation with Wasserstein ambiguity sets provides protection against distribution shifts, while the BCD-DRO algorithm scales to problems where direct MIBLP solving times out, enabling periodic re-optimization as workloads evolve. 
The critical fractile structure (Proposition~\ref{prop:critical_fractile}) automatically determines optimal reservation quantiles without manual heuristic tuning. %, eliminating the need for manual tuning of fixed heuristics.
Overall, this work highlights the importance of uncertainty-aware memory management in large-scale LLM serving. Future work includes integration with production serving runtimes, online adaptation using streaming workload statistics, and joint optimization with energy-aware and geographically distributed inference scheduling.
%This work highlights the importance of uncertainty-aware memory management in large-scale LLM serving. Future directions include integration with real serving runtimes such as vLLM and TensorRT-LLM, online adaptation using streaming workload statistics, and joint optimization of KV cache management with energy-aware and geographically distributed inference scheduling.

%==============================================================================

\appendix
\section{Kantorovich Duality Reformulation}
\label{app:duality}
%==============================================================================

This appendix provides a complete derivation of the tractable reformulation \eqref{eq:tractable}. 

\begin{proposition}[Worst-Case Expectation Reformulation]
\label{prop:duality}
The worst-case expectation over the Wasserstein ambiguity set $\sup_{P \in \calP_\varepsilon} \E_P[Q^{\text{stoch}}_i]$ equals the optimal value of a finite linear program with $O(N)$ variables and constraints.
\end{proposition}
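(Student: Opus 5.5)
The plan is to use the strong duality theorem for Wasserstein DRO with a 1-Wasserstein ball and a compact support, as in \cite{mohajerin2018data}. Fix $r_i$, let $\eta = r_i - L_{\text{in}}^i$, and write the loss as $\ell(\xi) = \max\{c_p(\xi-\eta),\, c_w(\eta-\xi),\, 0\}$ on the support $\Xi = [0, L_{\text{max}}^i]$. First I will write the primal problem. It is a supremum over transport plans $\Pi$ whose first marginal is $\hat P_N^i$ and whose transport cost is at most $\varepsilon$. Decompose $\Pi$ into conditional distributions $Q_n$, one for each sample $\hat\xi_i^{(n)}$. Lagrangian duality on the single budget constraint, with multiplier $\gamma \ge 0$, then gives
\begin{equation*}
\sup_{P\in\calP_\varepsilon^i}\E_P[\ell] = \inf_{\gamma\ge 0}\; \gamma\varepsilon + \frac{1}{N}\sum_{n=1}^N \sup_{\xi\in\Xi}\bigl[\ell(\xi) - \gamma|\xi - \hat\xi_i^{(n)}|\bigr].
\end{equation*}
Strong duality (no gap) holds because $\Xi$ is compact and $\ell$ is continuous and Lipschitz. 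I will cite this from \cite{mohajerin2018data} (Theorem 4.2 there) rather than reprove it. If $\varepsilon = 0$ the statement reduces to SAA, so that case is trivial.

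Second, I will evaluate each inner supremum in closed form and introduce epigraph variables $\theta_{i,n} \ge 0$. The function $\xi \mapsto \ell(\xi) - \gamma|\xi-\hat\xi^{(n)}|$ is piecewise linear on an interval, so its maximum is attained at a breakpoint or an endpoint. The candidates are $\xi = \hat\xi^{(n)}$, $\xi = \eta$, $\xi = 0$, and $\xi = L_{\text{max}}^i$.
\begin{itemize}
\item At $\xi=\hat\xi^{(n)}$ the value is $\ell(\hat\xi^{(n)})$. This yields constraints \eqref{eq:epi_preempt}--\eqref{eq:epi_waste} together with $\theta\ge 0$.
\item At $\xi=L_{\text{max}}^i$ the value is $c_p(L_{\text{max}}^i-\eta)-\gamma(L_{\text{max}}^i-\hat\xi^{(n)})$. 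This yields \eqref{eq:epi_upper}; the $c_w$ branch is dominated because $L_{\text{max}}^i \ge \eta$.
\item At $\xi=0$ the value is $c_w\eta-\gamma\hat\xi^{(n)}$. This yields \eqref{eq:epi_lower}.
\item At $\xi=\eta$ the loss is zero, so this candidate is dominated by $\theta \ge 0$.
\end{itemize}
Between candidates the function is linear, so no other points are needed. Requiring $\theta_{i,n}$ to dominate every candidate value gives exactly the sample constraints in \eqref{eq:tractable}.

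Third, I will assemble the pieces. The result is the LP $\min\{\gamma\varepsilon + \frac1N\sum_n\theta_n\}$ subject to four linear constraints per sample plus nonnegativity. It has $N+1$ variables and $O(N)$ constraints, which proves the proposition. Substituting it into \eqref{eq:dro_obj} gives Proposition~\ref{prop:tractable}. The factor $a_i=\sum_k\pi_{i,k}\ge 0$ is a nonnegative scalar multiplying the inner value, so it can be moved inside the minimization.

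The main obstacle is the second step. I have to argue that maximizing a piecewise-linear function minus $\gamma|\cdot|$ over a compact interval reduces to the finite candidate set, and that each candidate supplies the constraint with the correct sign. In particular, I must check that cross terms are dominated: for example, the $c_w$ branch evaluated at $L_{\text{max}}^i$, and the $c_p$ branch evaluated at $0$, both when $\gamma$ is small. I will handle this by checking the slopes on each linear piece, which are $\pm c_p\pm\gamma$ and $\pm c_w\pm\gamma$.
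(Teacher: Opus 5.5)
Your proposal is correct and follows essentially the same route as the paper's proof: you invoke the Wasserstein strong duality of \cite{mohajerin2018data} to obtain $\gamma\varepsilon + \frac{1}{N}\sum_n \sup_{\xi\in\Xi}[\ell(\xi)-\gamma|\xi-\hat\xi_i^{(n)}|]$, evaluate each inner supremum at finitely many candidate points, and pass to an epigraph LP in $(\gamma_i,\theta_{i,\cdot})$. Your enumeration is slightly more careful than the paper's Step 4, which leaves two points implicit: you check the kink $\xi=\eta$ (dominated, since the value there is $-\gamma|\eta-\hat\xi_i^{(n)}|\le 0$), and you use $\eta\in[0,L_{\text{max}}^i]$ to drop the positive parts at the endpoints.
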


\begin{proof}
The proof proceeds in five steps.

\smallskip\noindent\textbf{Step 1: Problem Setup.}
Consider the worst-case expectation over the Wasserstein ambiguity set:
\begin{equation}
    \sup_{P \in \calP_\varepsilon} \E_P[Q^{\text{stoch}}_i(r_i, \xi_i)]
    \label{eq:app_dro}
\end{equation}
where the ambiguity set is $\calP_\varepsilon = \{P : W_1(P, \hat{P}_N) \leq \varepsilon\}$. The 1-Wasserstein distance between distributions $P$ and $Q$ is defined as:
\begin{equation}
    W_1(P, Q) = \inf_{\pi \in \Pi(P, Q)} \int_{\Xi \times \Xi} |\xi - \xi'| \, d\pi(\xi, \xi')
    \label{eq:wasserstein_def}
\end{equation}
where $\Pi(P, Q)$ denotes the set of all joint distributions (couplings) with marginals $P$ and $Q$. The empirical distribution is $\hat{P}_N = \frac{1}{N}\sum_{n=1}^N \delta_{\hat{\xi}_i^{(n)}}$, and the stochastic cost is:
\begin{equation}
    Q^{\text{stoch}}_i(r_i, \xi) = c_p(L_{\text{in}}^i + \xi - r_i)^+ + c_w(r_i - L_{\text{in}}^i - \xi)^+
\end{equation}
The support of $\xi$ is $\Xi = [0, L_{\text{max}}^i]$.

\smallskip\noindent\textbf{Step 2: Lagrangian Relaxation and Strong Duality.}
The constraint $W_1(P, \hat{P}_N) \leq \varepsilon$ requires that the adversarial distribution $P$ lies within the Wasserstein ball. To handle this infinite-dimensional constraint, we introduce dual variable $\gamma_i \geq 0$ and form the Lagrangian:
\begin{equation}
    \mathcal{L}(P, \gamma_i) = \E_P[Q^{\text{stoch}}_i] - \gamma_i \left( W_1(P, \hat{P}_N) - \varepsilon \right)
\end{equation}
By strong duality~\cite{mohajerin2018data}:
\begin{equation}
    \sup_{P: W_1(P, \hat{P}_N) \leq \varepsilon} \E_P[Q^{\text{stoch}}_i] = \min_{\gamma_i \geq 0} \sup_{P} \mathcal{L}(P, \gamma_i)
    \label{eq:app_strong_duality}
\end{equation}

\smallskip\noindent\textbf{Step 3: Pointwise Evaluation via Kantorovich Duality.}
The inner supremum over $P$ in \eqref{eq:app_strong_duality} can be evaluated pointwise. For the empirical distribution $\hat{P}_N$, the optimal adversarial strategy transports each sample $\hat{\xi}_i^{(n)}$ to some point $\xi_n \in \Xi$, paying transportation cost $|\xi_n - \hat{\xi}_i^{(n)}|$. This yields the Kantorovich dual form:
\begin{equation}
    \!\! \min_{\gamma_i \geq 0} \left\{ \gamma_i \varepsilon \!+\! \frac{1}{N}\sum_{n=1}^N \sup_{\xi \in \Xi} \left\{ Q^{\text{stoch}}_i(r_i, \xi) \!-\! \gamma_i |\xi - \hat{\xi}_i^{(n)}| \right\} \right\}
    \label{eq:app_kantorovich}
\end{equation}
The dual variable $\gamma_i$ acts as a transportation cost penalty: larger $\gamma_i$ penalizes moving probability mass more heavily, keeping the worst-case distribution closer to $\hat{P}_N$.

\smallskip\noindent\textbf{Step 4: Closed-Form Pointwise Supremum.}
Define the adversarial cost function for each sample:
\begin{equation}
    \phi_i(r_i, \hat{\xi}, \gamma_i) = \sup_{\xi \in [0, L_{\text{max}}^i]} \left\{ Q^{\text{stoch}}_i(r_i, \xi) - \gamma_i |\xi - \hat{\xi}| \right\}
    \label{eq:app_phi}
\end{equation}
Since $Q^{\text{stoch}}_i(r_i, \xi)$ is piecewise linear in $\xi$ with breakpoint at $\xi^* = r_i - L_{\text{in}}^i$, and the transportation cost $|\xi - \hat{\xi}|$ is piecewise linear with breakpoint at $\hat{\xi}$, the objective inside the supremum is piecewise linear. For piecewise linear functions, the supremum over a compact interval is attained at either a vertex of the domain or a breakpoint. The candidate points are $\xi = 0$ (lower boundary), $\xi = \hat{\xi}$ (sample point), and $\xi = L_{\text{max}}^i$ (upper boundary). Evaluating at each:
\begin{align}
    \phi_i^{(0)} &= c_w(r_i - L_{\text{in}}^i)^+ - \gamma_i \hat{\xi} \label{eq:phi_0} \\
    \phi_i^{(\hat{\xi})} &= c_p(\hat{\xi} + L_{\text{in}}^i - r_i)^+ + c_w(r_i - L_{\text{in}}^i - \hat{\xi})^+ \label{eq:phi_sample} \\
    \phi_i^{(L)} &= c_p(L_{\text{max}}^i + L_{\text{in}}^i - r_i) - \gamma_i(L_{\text{max}}^i - \hat{\xi}) \label{eq:phi_max}
\end{align}
The closed-form supremum is $\phi_i = \max\{\phi_i^{(0)}, \phi_i^{(\hat{\xi})}, \phi_i^{(L)}\}$.

\smallskip\noindent\textbf{Step 5: Epigraph Reformulation.}
Substituting the closed-form $\phi_i$ back into \eqref{eq:app_kantorovich}:
\begin{equation}
    \min_{\gamma_i \geq 0} \left\{ \gamma_i \varepsilon + \frac{1}{N}\sum_{n=1}^N \phi_i(r_i, \hat{\xi}_i^{(n)}, \gamma_i) \right\}
\end{equation}
The $\max$ inside $\phi_i$ is non-smooth. We introduce auxiliary variables $\theta_{i,n}$ and use epigraph representation $\theta_{i,n} \geq \phi_i(r_i, \hat{\xi}_i^{(n)}, \gamma_i)$, which is equivalent to enforcing $\theta_{i,n} \geq \phi_i^{(0)}$, $\theta_{i,n} \geq \phi_i^{(\hat{\xi})}$, and $\theta_{i,n} \geq \phi_i^{(L)}$. The final linear program is:
\begin{subequations}
\label{eq:app_lp}
\begin{align}
    & \min_{\gamma_i, \theta} ~ \gamma_i \varepsilon + \frac{1}{N}\sum_{n=1}^N \theta_{i,n} \label{eq:app_obj}\\
    & \text{s.t.} ~  \theta_{i,n} \geq c_p(\hat{\xi}_i^{(n)} + L_{\text{in}}^i - r_i), ~ \forall n \label{eq:app_c1}\\
    & \theta_{i,n} \geq c_w(r_i - L_{\text{in}}^i - \hat{\xi}_i^{(n)}), ~ \forall n \label{eq:app_c2}\\
    & \theta_{i,n} \geq c_p(L_{\text{max}}^i + L_{\text{in}}^i - r_i) - \gamma_i(L_{\text{max}}^i - \hat{\xi}_i^{(n)}), ~ \forall n \label{eq:app_c3}\\
    & \theta_{i,n} \geq c_w(r_i - L_{\text{in}}^i) - \gamma_i \hat{\xi}_i^{(n)}, ~ \forall n \label{eq:app_c4}\\
    & \gamma_i \geq 0, ~ \theta_{i,n} \geq 0, ~ \forall n \label{eq:app_c5}
\end{align}
\end{subequations}
This is a linear program in $(\gamma_i, \theta)$ for fixed $r_i$. Constraints \eqref{eq:app_c1}--\eqref{eq:app_c4} correspond to \eqref{eq:epi_preempt}--\eqref{eq:epi_lower} in the main text.
\end{proof}

\newpage
\bibliographystyle{IEEEtran}

\begin{thebibliography}{10}

\bibitem{vllm2023}
W.~Kwon, Z.~Li, \emph{et al.}, ``Efficient memory management for large language model serving with PagedAttention,'' in \emph{Proc. SOSP}, 2023.

\bibitem{yu2022orca}
G.-I.~Yu, J.~S.~Jeong, \emph{et al.}, ``Orca: A distributed serving system for transformer-based generative models,'' in \emph{Proc. OSDI}, 2022.

\bibitem{agrawal2024taming}
A.~Agrawal, N.~Kedia, \emph{et al.}, ``Taming throughput-latency tradeoff in LLM inference with Sarathi-Serve,'' in \emph{Proc. OSDI}, 2024.

\bibitem{zhang2023h2o}
Z.~Zhang, Y.~Sheng, \emph{et al.}, ``H2O: Heavy-hitter oracle for efficient generative inference of large language models,'' in \emph{Proc. NeurIPS}, 2023.

\bibitem{li2024snapkv}
Y.~Li, Y.~Huang, \emph{et al.}, ``SnapKV: LLM knows what you are looking for before generation,'' \emph{arXiv:2404.14469}, 2024.


\bibitem{jin2023s3}
Y.~Jin, C.-F.~Wu, D.~Brooks, and G.-Y.~Wei, ``S$^3$: Increasing GPU utilization during generative inference for higher throughput,'' in \emph{Proc. NeurIPS}, 2023.

\bibitem{zheng2024response}
Z.~Zheng, X.~Ren, \emph{et al.}, ``Response length perception and sequence scheduling: An LLM-empowered LLM inference pipeline,'' in \emph{Proc. NeurIPS}, 2024.

\bibitem{qiu2024power}
H.~Qiu, W.~Mao, \emph{et al.}, ``Power-aware deep learning model serving with $\mu$-Serve,'' in \emph{Proc. USENIX ATC}, 2024.


\bibitem{shoeybi2019megatron}
M.~Shoeybi, M.~Patwary, \emph{et al.}, ``Megatron-LM: Training multi-billion parameter language models using model parallelism,'' \emph{arXiv:1909.08053}, 2019.

\bibitem{pope2023efficiently}
R.~Pope, S.~Douglas, \emph{et al.}, ``Efficiently scaling transformer inference,'' in \emph{Proc. MLSys}, 2023.

\bibitem{zheng2023judging}
L.~Zheng, W.-L.~Chiang, \emph{et al.}, ``Judging LLM-as-a-judge with MT-Bench and Chatbot Arena,'' in \emph{Proc. NeurIPS}, 2023.

\bibitem{cheng_fastllm}
J.~Cheng and D.~Nguyen, ``Scalable Joint Resource Allocation for SLO-Constrained LLM Inference in Heterogeneous GPU Clouds,'' \emph{arXiv:2504.07472}, 2025.

\bibitem{cheng_greenllm}
J.~Cheng and D.~Nguyen, ``Green-LLM: Optimal workload allocation for environmentally-aware distributed inference,'' \emph{arXiv:2507.09942}, 2025.

\bibitem{mohajerin2018data}
P.~Mohajerin~Esfahani and D.~Kuhn, ``Data-driven distributionally robust optimization using the Wasserstein metric,'' \emph{Math. Program.}, vol.~171, pp.~115--166, 2018.

\bibitem{edgeworth1888}
F.~Y.~Edgeworth, ``The mathematical theory of banking,'' \emph{J. Roy. Statist. Soc.}, vol.~51, no.~1, pp.~113--127, 1888.

\bibitem{arrow1951}
K.~J.~Arrow, T.~Harris, and J.~Marschak, ``Optimal inventory policy,'' \emph{Econometrica}, vol.~19, no.~3, pp.~250--272, 1951.

\bibitem{lee2021}
S.~Lee, H.~Kim, and I.~Moon, ``A data-driven distributionally robust newsvendor model with a Wasserstein ambiguity set,'' \emph{J. Oper. Res. Soc.}, vol.~72, no.~8, pp.~1879--1897, 2021.

\bibitem{kleinrock1975}
L.~Kleinrock, \emph{Queueing Systems, Volume 1: Theory}. Wiley, 1975.

\bibitem{queueing_llm2024}
Y.~Yang, Y.~Xu, and L.~Jiao, ``A queueing theoretic perspective on low-latency LLM inference with variable token length,'' in \emph{Proc. WiOpt}, 2024.

\bibitem{jager2019blockwise}
S.~J\"{a}ger and A.~Sch\"{o}bel, ``The blockwise coordinate descent method for integer programs,'' \emph{Math. Meth. Oper. Res.}, vol.~91, pp.~357--381, 2020.

\bibitem{wang2024burstgpt}
Y.~Wang, Y.~Chen, \emph{et al.}, ``BurstGPT: A real-world workload dataset to optimize LLM serving systems,'' in \emph{Proc. KDD}, 2025.


\bibitem{stojkovic2024dynamollm}
J.~Stojkovic, C.~Zhang, I.~Goiri, J.~Torrellas, and E.~Choukse, ``DynamoLLM: Designing LLM inference clusters for performance and energy efficiency,'' in \emph{Proc. HPCA}, 2025.

\bibitem{sharegpt2023}
ShareGPT Team, ``ShareGPT: Share your ChatGPT conversations,'' 2023. [Online]. Available: \url{https://sharegpt.com}

\bibitem{zhong2024distserve}
Y.~Zhong, S.~Liu, \emph{et al.}, ``DistServe: Disaggregating prefill and decoding for goodput-optimized large language model serving,'' in \emph{Proc. OSDI}, 2024.

\end{thebibliography}

\end{document}